\documentclass[11pt]{article}

\usepackage[letterpaper,margin=1in]{geometry}
\usepackage[T1]{fontenc}
\usepackage{amsmath,amsthm}
\usepackage{newtxtext,newtxmath}
\usepackage{microtype}
\usepackage[lined,ruled,linesnumbered]{algorithm2e}
\usepackage{xcolor}
\usepackage{hyperref}
\usepackage{cleveref}
\usepackage{tikz}
\usepackage{pgfplots}
\pgfplotsset{compat=1.17}
\usepackage[backend=biber,style=alphabetic,maxnames=8,maxalphanames=4,minalphanames=3]{biblatex}
\DeclareDelimFormat{multicitedelim}{\addcomma\space}
\hypersetup{
  colorlinks=true,
  linkcolor=blue!55!black,
  citecolor=blue!55!black,
  urlcolor=blue!55!black,
  hypertexnames=false,
  pdfauthor={},
  pdftitle={Algorithmic List Decoding of Reed--Solomon Codes up to Capacity in the Low-Rate Regime}
}

\allowdisplaybreaks

\newtheorem{theorem}{Theorem}[section]

\newtheorem{lemma}[theorem]{Lemma}
\newtheorem{proposition}[theorem]{Proposition}
\newtheorem{corollary}[theorem]{Corollary}

\newcommand{\rank}{\operatorname{rank}}

\newcommand{\F}{\mathbb F}
\newcommand{\E}{\mathbb E}

\newcommand{\eps}{\varepsilon}
\let \cref \Cref

\title{\textbf{Algorithmic List Decoding of Reed--Solomon Codes\\
up to Capacity}}
\author{
Joshua Brakensiek\thanks{
University of California, Berkeley.
\href{mailto:josh.brakensiek@berkeley.edu}{\texttt{josh.brakensiek@berkeley.edu}}
Supported in part by a Simons Investigator award of Venkatesan Guruswami, and NSF awards CCF-2211972 and DMS-2503280.
}
\and
Yeyuan Chen\thanks{
Department of EECS, University of Michigan, Ann Arbor.
\href{mailto:yeyuanch@umich.edu}{\texttt{yeyuanch@umich.edu}}
Supported in part by NSF award CCF-2236931.
}
\and
Aaron Putterman\thanks{
Harvard University, Cambridge, Massachusetts.
\href{mailto:aputterman@g.harvard.edu}{\texttt{aputterman@g.harvard.edu}}
Supported in part by a Jane Street Graduate Research Fellowship, the Simons Investigator awards of Madhu Sudan and Salil Vadhan, and AFOSR award FA9550-25-1-0112.
}
\and
Zihan Zhang\thanks{
Simons Institute for the Theory of Computing, Berkeley and
Institute for Advanced Study, Princeton.
\href{mailto:zzhsdj@foxmail.com}{\texttt{zzhsdj@foxmail.com}}
}
\and
Kai Zhe Zheng\thanks{
Simons Institute for the Theory of Computing, Berkeley and
Institute for Advanced Study, Princeton.
\href{mailto:kzzheng@mit.edu}{\texttt{kzzheng@mit.edu}}
}
}
\date{}

\begin{document}
\maketitle

\begin{abstract}
We provide a deterministic polynomial-time list decoding algorithm for Reed--Solomon codes over prime fields that approaches list decoding capacity on every evaluation set for any (constant) rate. 
\end{abstract}

\section{Introduction}
In the field of coding theory, the most fundamental error-correcting code is the Reed--Solomon (RS) code~\cite{ReedSolomon1960} which has found numerous practical applications including in data storage~\cite{pursley1985performance,plank1997tutorial, plank2009performance, guruswami2016repairing}, wireless communication~\cite{wu1987coding, wicker1992reed, koetter2003algebraic, gross2006applications}, and cryptography~\cite{shamir1979share, mceliece1981sharing, chaum1988multiparty, rabin1989verifiable, pedersen1991threshold, franklin1992communication, juels2006fuzzy, goldberg2007improving, shacham2008compact, ben2019completeness,BBHR18,ACFY24,ACFY25}.
Simply stated, a Reed--Solomon interprets a message to be transmitted as the coefficients of a low-degree univariate polynomial which is then evaluated at many evaluation points to introduce redundancy. More formally, given a finite field $\F_q$ and distinct evaluation points $\alpha_1,\ldots,\alpha_n \in \F_q$, one can construct a Reed--Solomon code of block length $n$ and rate $R:=k/n$ as follows:
\[
\mathsf{RS}_{n, k}(\alpha_1,\ldots,\alpha_n)
=\left\{(P(\alpha_1),\ldots,P(\alpha_n))\in\F_q^n: P\in\F_q[X]\text{ }\text{ and }\deg P<k\right\}.
\]
A defining characteristic of Reed--Solomon codes is that they are maximum distance separable (MDS). In other words, every nonzero codeword $\vec{c} \in \mathsf{RS}_{n,k}(\alpha_1, \hdots, \alpha_n)$ has Hamming weight at least $n-k+1$. Equivalently, the code has relative distance $\delta=\frac{n-k+1}{n}$, perfectly meeting the Singleton bound~\cite{singleton1964maximum}. From the perspective of error-correction, where we receive a message $\vec{y} \in \F_q^n$, Reed--Solomon codes have optimal \emph{unique decoding}. Concretely, if there exists a codeword $\vec{c} \in \mathsf{RS}_{n,k}(\alpha_1, \hdots, \alpha_n)$ with Hamming distance at most $\lfloor \frac{n-k}{2}\rfloor$ (approximately $(1-R)/2$ relative distance) from $\vec{y}$, then $\vec{c}$ is unique and in fact can be found efficiently~\cite{peterson1960encoding,gorenstein1961class,Ber68b,Mas69}.

A much more ambitious quest by the coding theory community has been to understand the decodability of Reed--Solomon codes beyond the unique decoding regime, namely in the paradigm of \emph{list decoding}, introduced independently by Elias~\cite{elias1957list} and Wozencraft~\cite{wozencraft1958list}. In this setting, we seek to find a list of codewords which are within some specified Hamming distance of our corrupted message. By relaxing this notion of recovery, a much larger fraction of errors can be recovered from~\cite{Sudan1997,GuruswamiSudan1999}. The concept of list decoding has proved to be remarkably versatile, finding applications across the whole spectrum of theoretical computer science such as pseudorandomness and randomness extraction
\cite{SudanTrevisanVadhan01,Trevisan01,GuruswamiUmansVadhan09},
average-case complexity and hardness amplification
\cite{GoldreichLevin89,GoldreichRubinfeldSudan00},
cryptography and related combinatorial problems
\cite{SilverbergStaddonWalker01,AkaviaGoldwasserSafra03,
JuelsSudan06,AlonGuruswamiKaufmanSudan07}, and interactive proof systems \cite{BenSassonCarmonIshaiKoppartySaraf23,BenSassonCarmonHaboeckKoppartySaraf26}.

We say that a Reed--Solomon code is $(\rho, L)$ list decodable if for every possible message $y \in \F_q^n$, there are at most $L$ codewords which have relative Hamming distance at most $\rho$ from $y$. Unlike unique decoding which fails to exist beyond a radius of $(1-R)/2$, list decoding is a meaningful notion up to a radius of $1-R$, known as \emph{list-decoding capacity}~\cite[Theorem~7.4.1]{guruswami2012essential} --- beyond a radius of $1-R$, a list of size exponential in $n$ is always required.  More precisely, we say that our Reed-Solomon code attains list-decoding capacity if for any $\eps > 0$, our code is $(1-R-\eps, n^{O_{\eps}(1)})$ list decodable. A parameter regime of particular interest is the \emph{low-rate regime} (or \emph{high-noise regime}) where rate $R = \Theta(\eps)$ and (relative) radius $\rho = 1-\eps$.

In general, it is not understood which Reed--Solomon codes achieve list-decoding capacity. For the past three decades, the benchmark for the Reed--Solomon and other classes of codes has been the \emph{Johnson radius} of $1-\sqrt{R}$, for which the seminal works of Sudan~\cite{Sudan1997} and Guruswami and Sudan~\cite{GuruswamiSudan1999} gave polynomial-time list-decoding algorithms\footnote{Recent work of Chatterjee, Harsha, and Kumar~\cite{ChatterjeeHarshaKumar2026} gives a deterministic polynomial-time list decoding algorithm up to the Johnson radius.} for RS codes, culminating in efficient decoding up to the \emph{Johnson radius} $1-\sqrt{R}$
up to lower-order terms. This falls short of the list-decoding capacity $1-R$, leaving a substantial gap between what is information-theoretically possible and what is known to be efficiently achievable for RS codes.

The Johnson radius has served as a hard barrier for progress in the field for good reason. Several works provided evidence of barriers to list decoding RS codes substantially beyond the Johnson radius~\cite{RR03,GR06,CW07,BKR10}. In particular, Ben-Sasson, Kopparty, and Radhakrishnan~\cite{BKR10} exhibited, over fields of bounded characteristic, RS codes with superpolynomially large lists at decoding radius approaching the Johnson bound in certain parameter regimes. Complementing these combinatorial obstructions, several works established computational hardness for RS decoding at substantially larger decoding radius~\cite{GV05,CW07,GGG18}. On the positive side, a recent line of work has shown that RS codes with random evaluation points achieve list-decoding capacity with the optimal list size~\cite{RudraWootters2015,ShangguanTamo2023,GuoLiShangguanTamoWootters2021,GoldbergShangguanTamo2023,BrakensiekGopiMakam2024,GuoZhang2023,AlrabiahGuruswamiLi2024,AlrabiahGuoGuruswamiLiZhang2025}.\footnote{These are combinatorial results: they bound the number of codewords in a Hamming ball, without providing an efficient algorithm for finding them.} In fact, stronger results \cite{BrakensiekGopiMakam2024,GuoZhang2023,AlrabiahGuruswamiLi2024,AlrabiahGuoGuruswamiLiZhang2025} attaining the generalized Singleton bound of Shangguan and Tamo \cite{ShangguanTamo2023} are known for every fixed list size. In contrast, no explicit family of RS codes is known to achieve comparable combinatorial guarantees. Moreover, even for the RS codes  with random evaluation points covered by these results, no efficient algorithm is known to attain their beyond-Johnson decoding radius.

In part due to the lack of progress on list-decoding of RS codes beyond the Johnson codes, many coding theorists considered variants of RS codes such as Parvaresh--Vardy codes~\cite{parvaresh2005correcting}, Folded Reed-Solomon Codes~\cite{GuruswamiRudra2008}, and Univariate Multiplicity codes~\cite{kopparty2014multiplicity}. In particular, folded Reed--Solomon codes introduced by Gurusawmi and Rudra~\cite{GuruswamiRudra2008} are the first explicit family of codes that can be efficiently list decoded up to capacity, albeit with a field size $n^{\Theta(1/\eps)}$ when $\eps$-close to capacity.

A subsequent line of work further developed and strengthened the folded Reed--Solomon framework, improving various aspects of its list-decoding guarantees~\cite{GuruswamiWang2013,Kopparty2015,KoppartyRonZewiSarafWootters2023,GHKS24,srivastava2025improved,chen2025explicit,ashvinkumar2026algorithmic}. Related ideas have also led to several other explicit capacity-achieving algebraic codes such as~\cite{GRZ21,GX22}. 

Despite this remarkable recent progress, the following fundamental question, dating back to the work of Guruswami and Sudan, remains open.
\begin{center}
{\centering
\itshape
Can Reed--Solomon codes of rate $R$ be efficiently list decoded
beyond the Johnson radius $1-\sqrt{R}$?
}
\end{center}
In fact, even an explicit Reed--Solomon code with combinatorial list-decoding bound beyond the Johnson radius remained unknown \footnote{Very recently, we learned through personal communication of partial derandomization results attaining the generalized Singleton bound for a certain constant list size, albeit with a super-exponential alphabet size~\cite{ChatterjeeEtAl2026ExplicitRS}.}.

In this paper, we resolve this question in the affirmative for any constant rates over prime fields. Specifically, we give a polynomial-time algorithm that list decodes Reed--Solomon codes substantially beyond the Johnson radius (even achieving list decoding capacity!). The result holds for arbitrary sets of distinct evaluation points, with no randomness assumption. 
The crucial step in our argument is to establish the result in a sufficiently low, but still constant, rate regime. A simplified statement of our main result in this regime is given by the following informal theorem (\Cref{thm:main}).
\begin{theorem}[Main Result, Informal Version in Low (Constant) Rate Regime]\label{thm:main}
Fix a constant parameter $\theta\in(0,1).$ Then there exists a constant $\eps_0(\theta)>0$ such that for every constant $\eps$ with $\eps_0(\theta)>\eps>0$, any Reed--Solomon code
with arbitrary distinct evaluation points over a prime field
$\mathbb{F}_q$ of size $q=\Theta_{\theta,\eps}(n)$ and rate
\[
R\leq(1-\theta)\eps
\]
can be efficiently list decoded from up to $1-\epsilon$ fraction of errors,
with list size bounded by $n^{\operatorname{poly}_{\theta}(1/\epsilon)}$.
\end{theorem}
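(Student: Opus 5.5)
Here is the plan. Let $y\in\F_q^n$ be received, $k=Rn\le(1-\theta)\eps n$, and let $P$ with $\deg P<k$ be any codeword agreeing with $y$ on a set $S$ with $|S|\ge \eps n$. The agreement $\eps n$ exceeds $k$ only by a factor $1/(1-\theta)$, which is far below the Guruswami--Sudan requirement $\sqrt{kn}$. So interpolation alone cannot work, and the extra power has to come from the prime-field hypothesis.

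\textbf{Main idea: multiplicity from additive structure.} Because $q$ is prime, $\F_q$ has no proper subfields and no nontrivial additive subgroups. Sum-product or Cauchy--Davenport type expansion is therefore available. The plan is to run a Guruswami--Sudan style interpolation not on the $n$ pairs $(\alpha_i,y_i)$, but on an enlarged, structured set of points. For a small constant $m$, consider $m$-fold sums $\alpha_{i_1}+\dots+\alpha_{i_m}$ together with the corresponding symmetric values. Equivalently, view $P$ through its power sums: the polynomial $\prod_j (Z-P(\alpha_{i_j}))$ depends only on the elementary symmetric functions of the $\alpha_{i_j}$. Over a prime field of size $q=\Theta(n)$, Cauchy--Davenport gives $|mA|\ge \min(q,m|A|-m+1)$ for $A=\{\alpha_i: i\in S\}$. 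Hence with $m\approx \mathrm{poly}_\theta(1/\eps)$ the sumset of the agreement set covers a constant fraction of the field.

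\textbf{Key steps.}
\begin{enumerate}
\item Build an interpolation polynomial $Q(X,Y_1,\dots,Y_m)$, symmetric in the paired variables, vanishing with multiplicity $s$ on all $m$-tuples of received pairs. The number of monomials must exceed the number of constraints. Here one counts in a weighted degree where each $Y_j$ has weight $k$, and the available degree is compared with $(\eps n)^m$ agreeing tuples versus $n^m$ total tuples.
\item Show that $Q(X,P,\dots,P)$, after substitution, has more zeros (counted with multiplicity) than its degree, and therefore vanishes identically.
\end{enumerate}
The gain over the univariate count is that the agreeing tuples project, under the symmetric-function map, onto a set of size comparable to $q$ by the additive combinatorics. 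This lets the degree budget scale like $mk$ while the zero count scales like $|mA|\cdot s$. The condition $R\le(1-\theta)\eps$ is exactly what should make $m k(1+\theta')<|mA|$ achievable once $m$ is a large enough function of $\theta$ and $\eps<\eps_0(\theta)$.

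\textbf{Recovery.} Once $Q$ vanishes on the diagonal substitution $Y_j=P$, the candidates $P$ are roots of a polynomial equation over $\F_q[X]$ of degree $\mathrm{poly}_\theta(1/\eps)$ in suitable variables. I would extract them by factoring, or by a Guruswami--Rudra style linear-algebraic pruning on an affine subspace. Then I would enumerate the resulting subspace or variety of dimension $\mathrm{poly}_\theta(1/\eps)$, which gives list size $q^{\mathrm{poly}_\theta(1/\eps)}=n^{\mathrm{poly}_\theta(1/\eps)}$ because $q=\Theta(n)$. Each candidate is checked against $y$ in polynomial time.

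\textbf{Main obstacle.} The hardest step is the zero-counting in the first step. Cauchy--Davenport controls the size of the sumset $mA$, but the multiplicity of vanishing at a sum must be inherited by many representations simultaneously. To use the full set of $m$-tuples rather than a single representative, one needs an additive-energy or structured-representation argument (a Freiman or Bogolyubov type statement in $\F_q$) showing that typical sums have many representations with the same agreement pattern. I would first try a Combinatorial Nullstellensatz argument directly on the grid $A^m$. If that loses too much, I would fall back to iteratively dropping to sub-tuples, paying the $\theta$ slack at each level.
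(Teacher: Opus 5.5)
Your plan has a genuine gap at its core. Nothing in it ties the value of $P$ at a sum $\alpha_{i_1}+\cdots+\alpha_{i_m}$ to the received symbols $y_{i_1},\dots,y_{i_m}$. So the Cauchy--Davenport bound on the $m$-fold sumset of the agreement set never produces zeros of anything. Suppose $Q(X,Y_1,\dots,Y_m)$ vanishes at $(\alpha_{i_1}+\cdots+\alpha_{i_m},y_{i_1},\dots,y_{i_m})$ and you substitute $Y_j=P(X)$. That constraint says nothing about $Q(x,P(x),\dots,P(x))$ at $x=\alpha_{i_1}+\cdots+\alpha_{i_m}$ unless $P(x)=y_{i_1}=\cdots=y_{i_m}$. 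No relation of that kind holds for a general $P$ of degree $k$. Suppose instead you keep separate variables and substitute $Y_j=P(X_j)$. Then the vanishing locus is the grid $S^m$ and the sumset is irrelevant. The Combinatorial Nullstellensatz with multiplicity $s$ then needs every $X_j$-degree below $s|S|$. This gives roughly $\bigl((s\eps n)^2/2k\bigr)^m$ monomials against $\bigl(s^2/2\bigr)^m n^m$ constraints, which is just the $m$-th tensor power of the one-variable Guruswami--Sudan count. It still forces $k<\eps^2 n$ up to constants, the Johnson regime you are trying to beat. Symmetrizing divides both sides by about $m!$, and making $Q$ see the $X_j$ only through their sum can only remove monomials.

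The symmetric-function remark does not rescue this. $\prod_j(Z-P(\alpha_{i_j}))$ depends on all of $e_1,\dots,e_m$, and the map from multisets to $(e_1,\dots,e_m)$ is injective, so there is no compression; Cauchy--Davenport only sees $e_1$. Also, comparing $(\eps n)^m$ agreeing tuples with $n^m$ tuples makes the relevant ratio $\eps^m$ rather than $\eps$. The Freiman/Bogolyubov fallback cannot repair an obstruction that is algebraic (no functional relation for non-additive $P$) rather than combinatorial.

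You have also misplaced the source of the gain. Your claim that interpolation alone cannot work is false. In the paper the improvement comes entirely from interpolation, and primality is used only afterwards. The missing idea is to treat the Hasse derivatives $P^{[1]},\dots,P^{[d]}$ as extra unknowns:
\begin{itemize}
\item Interpolate $Q(X,Y_0,Y_1,\dots,Y_d)$ of $(1,k-1,k-2,\dots,k-d-1)$-weighted degree below $mA$.
\item At each $(\alpha_i,y_i)$, demand multiplicity $m$ only on tuples consistent with the backward Taylor identity $Y_0=y_i+\sum_{j=1}^d(-1)^{j+1}T^jY_j+TE$ with $E\equiv0\pmod{T^d}$. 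Concretely, require $q_{b,\vec e}(T)\equiv0\pmod{T^{m-db}}$ for $db<m$.
\item A kernel-counting argument shows each point costs fewer than $|\mathcal B|m^3d^{-2\theta/(5+\theta)}$ independent constraints, while $\dim\mathcal Q>\theta^3|\mathcal B|(k-1)m^3/384$.
\item With $d=\lceil\eps^{-3/\theta}\rceil$, $m=d^3$ and $\eps$ small, this leaves room for $k/n\le(1-\theta)\eps$.
\end{itemize}
Primality enters only through Kopparty's root finder for the differential equation $Q(X,P,P^{[1]},\dots,P^{[d]})\equiv0$. That step needs $q$ prime and larger than the relevant degrees, and it yields list size $q^{O(d)}=n^{\operatorname{poly}_\theta(1/\eps)}$.
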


We remark that the Johnson radius only has rate $O(\epsilon^2)$ with respect to the radius $1-\epsilon$. Below, we discuss the origins of our new ideas and provide an overview of the techniques. See \cref{thm:main-formal} for a more precise asymptotic statement. It is notable to mention for \cref{thm:main} that in the regime for which $n \gg \Theta_{\theta,\eps}(n/k)$, we can actually select $q = n$. In particular, for small constant rates, our algorithm works for Reed--Solomon codes where the evaluation points are precisely $\F_q$ itself.

\paragraph{Extension to All Rates.}
 Subsequent to the first posting of our paper, thanks to a simple reduction due to Omar Alrabiah, Rohan Goyal, and Venkatesan Guruswami, this low-rate result then extends to any constant rate, as stated in the subsequent corollary (\Cref{cor:all-rates}). The proof of \Cref{cor:all-rates} is included in \Cref{sec:main-cor}, along with a more formal statement in \cref{cor:all-rates-formal}.
 
\begin{corollary}[List Decoding up to Capacity at All Constant Rates, Informal Version]
\label{cor:all-rates}
Fix constants $R\in(0,1)$ and $\delta\in(0,1-R)$.
Then there exists a constant $C=C(R,\delta)>0$ such that, for all
sufficiently large $n$, every Reed--Solomon code of block length $n$ and
rate at most $R$, with arbitrary distinct evaluation points over a prime
field $\F_q$ of size $q\ge Cn$
can be efficiently list decoded from a
\[
1-R-\delta
\]
fraction of errors.  Moreover, the output list has size
$n^{O_{R,\delta}(1)}$.
\end{corollary}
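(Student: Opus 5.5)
We reduce the all-rate statement to \Cref{thm:main} by restricting to random (or structured) sub-blocks of coordinates, so that the local code has low rate while the received word still agrees with each codeword we care about on an $\eps$-fraction. Concretely, fix $R$ and $\delta$ and a codeword $P$ with $\deg P<k\le Rn$ agreeing with the received word $y$ on a set $A$ of size at least $(R+\delta)n$. Pick a small constant $\eps$, to be fixed in terms of $R,\delta$ and below $\eps_0(\theta)$ for, say, $\theta=1/2$. The plan is a ``puncture-and-shift'' reduction. We cannot lower the degree of $P$ by restricting coordinates. Instead we \emph{guess} part of $P$. Write $m=\lceil k/t\rceil$ for a constant $t$. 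We would like a subset $S$ of coordinates with $|S|=n'$, and a known polynomial $Q$, such that $P-Q$ restricted to $S$ coincides with a polynomial of degree $<k'$ with $k'/n'\le(1-\theta)\eps$, while $|A\cap S|\ge \eps n'$.

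This is not possible directly, so the cleaner route is the one I expect Alrabiah--Goyal--Guruswami used: \emph{pass to a low-rate code of different length by choosing a subset $T\subseteq A$-guessing set.} First pick a uniformly random set $T$ of $k-k'$ coordinates, where $k'=\lfloor (1-\theta)\eps n'\rfloor$. Then pick $S=[n]\setminus T$ with $n'=n-|T|$. Enumerate over guesses that $T\subseteq A$; the probability of this is about $\binom{|A|}{|T|}/\binom{n}{|T|}$, which is $\ge (|A|/n)^{|T|}$ and is exponentially small. That is too costly, so instead we derandomize by guessing the values: on $T$ we know $P=y$ if $T\subseteq A$.

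Let $Z_T(X)=\prod_{i\in T}(X-\alpha_i)$ and let $I_T$ interpolate $y$ on $T$. Then $P=I_T+Z_T\cdot P'$ with $\deg P'<k-|T|=k'$. On $S$, $P'(\alpha_i)=(y_i-I_T(\alpha_i))/Z_T(\alpha_i)$ agrees with the transformed word $y'$ exactly on $A\cap S$. The new code is $\mathsf{RS}_{n',k'}$ on the distinct points $\{\alpha_i\}_{i\in S}$. Its rate is $k'/n'\le(1-\theta)\eps$, and its agreement is $|A|-|T|\ge (R+\delta)n-k\ge \delta n\ge\eps n'$ provided $\eps\le\delta$. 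We then apply \Cref{thm:main} with this $\eps$ to list-decode $y'$, recover $P$ from $P'$, and finally keep the codewords within distance $1-R-\delta$ of $y$.

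The field size is fine, since $q\ge Cn\ge \Theta_{\theta,\eps}(n')$. The list size per $T$ is $n^{\mathrm{poly}(1/\eps)}$. The only remaining issue, and the main obstacle, is choosing a small family of sets $T$ so that for every $A$ of density $\ge R+\delta$ some $T$ in the family lies inside $A$. A single $T$ of size about $k$ inside an unknown $A$ cannot be found with polynomially many guesses.

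To fix this we iterate. Split the degree reduction into $r=O(1/\eps)$ rounds, each removing only $\le \eps' n$ of the degree, where $\eps'$ is a small constant. In each round, first list-decode a $(k_i+\text{slack})$-size subproblem, or more simply guess $T_i$ as a union of constantly many contiguous blocks in a fixed ordering. Alternatively, replace exact containment with agreement on $T$ at rate just above the current rate: compute the candidates on $T$ via the low-rate theorem recursively, each as a list of polynomials, with the list size multiplying by $n^{O(1)}$ per round. The total becomes $n^{O_{R,\delta}(1)}$ since the number of rounds is $O_{R,\delta}(1)$.

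I would first try this recursive version. Choose a partition of the coordinates into $O(1/\delta)$ blocks. Averaging gives a block $B$ on which $P$ agrees with $y$ on at least a $(R+\delta)$ fraction. Apply the reduction there with blocks of size comparable to $k/\eps$, so that the local rate $k/|B|\le(1-\theta)\eps$ when combined with guessing. The delicate verification is that the blocks can be taken large enough for the low rate yet still inherit agreement; this is where I expect the real work to lie.
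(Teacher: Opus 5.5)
Your proposal has a genuine gap. It never reaches a polynomial-time reduction, and the route it takes cannot reach one.

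\textbf{Why your route fails.} Every variant you consider restricts to a subset of the original $n$ coordinates. Puncturing can only \emph{raise} the rate $k/n$, so you are forced to lower the degree by guessing a set $T\subseteq A$ of size $k-k'$ (linear in $n$ at constant rate) on which $P=y$. As you observe, this is hopeless. Any family of candidate sets $T$ with a member inside every agreement set $A$ of size $\lceil (R+\delta)n\rceil$ must have at least $\binom{n}{|T|}/\binom{|A|}{|T|}\ge (n/|A|)^{|T|}=2^{\Omega(n)}$ members. None of your fixes escapes this:
\begin{itemize}
\item Splitting the guess into $O(1/\eps)$ rounds does not help, because the total number of guessed coordinates is still $\Theta(n)$.
\item Unions of contiguous blocks fail, because an adversarial $A$ of density $R+\delta<1$ need not contain any long block.
\item The recursive variant is circular: the candidates on $T\subseteq[n]$ would have to come from the low-rate theorem applied to a code of even higher rate $k/|T|$.
\item The block-partition idea asks for blocks of size about $k/\eps$. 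That exceeds $n$ once $R$ is a constant and $\eps$ is small, so no such block exists among the original coordinates.
\end{itemize}
That last observation is the real hint: you need \emph{more} coordinates than $n$, not fewer.

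\textbf{The missing idea is padding.} The low-rate theorem requires agreement $\eps n$ only under $k\le(1-\theta)\eps n$. So the agreement it needs is roughly $k/(1-\theta)$, a threshold set by $k$ and not by the block length. The paper's construction is as follows.
\begin{itemize}
\item Take $\theta=\delta/(2(R+\delta))$, so that $R/(1-\theta)<R+\delta$. A fixed $\theta=1/2$ does not work here.
\item Take a constant $\eta<R+\delta$ small enough for \cref{thm:main-formal}.
\item Set $k'=\max\{k,\lceil\eta^{-3/\theta}\rceil+1\}$ and $N=\max\{n,\lceil k'/((1-\theta)\eta)\rceil\}=\Theta_{R,\delta}(n)$.
\item Add $N-n$ fresh distinct evaluation points of $\F_q$, which is exactly what the hypothesis $q\ge Cn$ pays for, and pad $y$ with zeros.
\end{itemize}
The padded code $\mathsf{RS}_{N,k'}$ has rate at most $(1-\theta)\eta$. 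Any $P$ of degree $<k$ with $(R+\delta)n$ agreements keeps all of them. Moreover $\eta N\le k'/(1-\theta)+O(1)\le Rn/(1-\theta)+O(1)<(R+\delta)n$ for large $n$. Running the low-rate decoder with parameters $(N,k',\eta,\theta)$ and then pruning by agreement on the original $n$ coordinates gives the list in time $q^{O_{R,\delta}(1)}$. No guessing is needed at all.
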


\paragraph{Further Developments.} In addition to the aforementioned black-box proof of \cref{cor:all-rates}, subsequent to the initial posting of our paper proving \cref{thm:main}, a number of groups contemporaneously realized (with LLM assistance) that the proof of \cref{thm:main} can be modified in a white-box manner to achieve capacity for all constant rates. Such developments will be documented in more detail in a future version of this manuscript.

\subsection{Origins of the Main Ideas and Technique Overview}
In the list-decoding problem, we are given the evaluation points $(\alpha_i \in \F_q)_{i=1}^n$, the received word $(y_i \in \F_q)^n_{i=1}$, the target agreement $A$, and the code dimension $k$. The goal is to find all polynomials $P\in\F_q[X]_{\leq k-1}$ (i.e., with degree at most $k-1$) such that $\sum^n_{i=1}[P(\alpha_i)=y_i]\ge A$ in polynomial time, where $A$ should be as small as possible. Our algorithm still follows the high level template introduced by Sudan \cite{Sudan1997} and Guruswami--Sudan \cite{GuruswamiSudan1999} consisting of interpolation and root-finding. In our setting however, the object that we interpolate is significantly different. 
Indeed, rather than constructing an interpolating polynomial involving only the
unknown message polynomial $P(X)$, we introduce several of its \emph{Hasse
derivatives} as additional formal variables.
 
Recall that the Hasse derivatives are the coefficients in the formal Taylor expansion
\[
P(X+T)=\sum_{j\geq 0}P^{[j]}(X)T^j.
\]
Crucially, however, the decoder is \emph{not} given any derivative information
about the unknown polynomial $P$. At each evaluation point $\alpha_i$, the
only information available to us is the ordinary received value $y_i$, and at
an agreement position we know only that
\[
    P(\alpha_i)=y_i.
\]
In particular, the values
\[
    P^{[1]}(\alpha_i),\ldots,P^{[d]}(\alpha_i)
\]
are completely unknown. A central challenge is therefore to impose
interpolation constraints involving these higher-order derivatives using only
the zeroth-order agreement condition $P(\alpha_i)=y_i$.
The central idea is to construct a multivariate polynomial $Q$ which vanishes when evaluated on the tuple $\bigl(P(X),P^{[1]}(X),\ldots,P^{[d]}(X)\bigr)$ for every polynomial $P$ which has sufficiently high agreement with the received word. With this context, our algorithm consists of an interpolation procedure and a root-finding procedure.

\begin{itemize}
    \item[1.] {Interpolation Step}. Find a \emph{non-zero} polynomial $Q\in\F_q[X,Y_0,Y_1,\dots,Y_d]$ with $(1,k-1,\dots,k-d-1)$-degree less than $mA$, where $d$ represents the highest-order Hasse derivative we consider, and $m$ is our target root multiplicity. In other words, for every monomial $X^aY_0^{j_0}Y_1^{j_1}Y_2^{j_2}\cdots Y_d^{j_d}$ appearing in $Q$, we have that $a+(k-1)j_0+(k-2)j_1+\cdots+(k-d-1)j_d < mA$.
    This property ensures that for any polynomial $P \in \F_q[X]$ of degree less than $k$, we have that $Q(X, P(X), P^{[1]}(X), \hdots, P^{[d]}(X))$ is a univariate polynomial of degree less than $mA$.
    We also require that $Q$ satisfies the following additional constraint: If $P\in\F_q[X]_{\leq k-1}$ satisfies $P(\alpha_i)=y_i$, then $(X-\alpha_i)^m$ is a factor of $Q(X,P(X),\dots,P^{[d]}(X))$. The degree restrictions on $Q$ are chosen so that, for every polynomial
$P\in\F_q[X]_{\le k-1}$,
\[
\deg_X Q\bigl(X,P(X),P^{[1]}(X),\ldots,P^{[d]}(X)\bigr)<mA.
\]
Consequently, if $P$ agrees with the received word in at least $A$ positions,
then the multiplicity condition above gives at least $mA$ roots of this polynomial counted with
multiplicity, which guarantees
\begin{equation} \label{eq:identical-vanish}
Q\bigl(X,P(X),P^{[1]}(X),\ldots,P^{[d]}(X)\bigr)\equiv0.
\end{equation}
We call this space of the polynomials satisfying such degree restrictions our interpolation space and henceforth denote it by $\mathcal Q$. For technical reasons, it will also be helpful to isolate the allowed monomials involving only $Y_2,\ldots, Y_d$ and denote them by $\mathcal{B}$.
    
\item[2.] Root-finding Step. Having found $Q$ as above, the guarantee in \eqref{eq:identical-vanish} reduces our list-decoding problem to a root-finding one. Namely, we must find all low degree $P\in\F_q[X]_{\leq k-1}$ which satisfy \eqref{eq:identical-vanish}. Fortunately, this can be done by using the Univariate Multiplicity decoder of Kopparty~\cite{Kopparty2015} essentially as a blackbox. Altogether, we are guaranteed to find a list containing every polynomial $P$ with agreement at least $A$. Since this procedure may also find spurious polynomials, we check that for each such recovered $P$ whether it is indeed a solution with at least $A$ agreements with the received word.
\end{itemize}

\paragraph{Finding the Prescribed $Q$.}
Given the above outline, one can see that the main difficulty is therefore the interpolation step: how can we construct a nonzero $Q$ satisfying the required multiplicity condition without knowing the candidate polynomial $P$?

Fix one evaluation point $\alpha_i$ and received value $y_i \in \F_q$. When $P\in\F_q[X]$ satisfies $P(\alpha_i)=y_{i}$, we require $(X-\alpha_i)^m\mid Q(X,P(X),\dots,P^{[d]}(X))$, which is equivalent to 
\begin{equation}\label{eq:conditionE}
Q\left(\alpha_i+T,P(\alpha_i+T),P^{[1]}(\alpha_i+T),\dots,P^{[d]}(\alpha_i+T)\right) \equiv 0 \mod T^m.
\end{equation}
Notice that there is already an issue here. We do not know any of the values
\[
P(\alpha_i+T),P^{[1]}(\alpha_i+T),\ldots,P^{[d]}(\alpha_i+T).
\]
A naive approach is to introduce formal variables $Y_0,\ldots, Y_d$ for the respective derivatives above, and simply require 
\begin{equation}\label{eq:conditionF}
Q\left(\alpha_i+T,Y_0,\ldots,Y_d\right)
\equiv 0 \pmod{T^m},
\end{equation}
over every formal choice of $Y_0, \ldots, Y_d$ where the congruence is in $\F_q[T,Y_0,\ldots,Y_d]$. Then, to find our desired interpolant $Q$, we set up a homogeneous linear system in variables corresponding to the monomials in $\mathcal{Q}$ with one equation per constraint imposed. Unfortunately, this approach leads to more constraints than monomials and thus has no guarantee of a non-zero solution.

\paragraph{Reducing the Interpolation Constraints.} Observe that such a requirement is far stronger than
necessary, however. Indeed, we only need the condition to hold for tuples that can arise from
a polynomial passing through $(\alpha_i,y_i)$. The key observation is that these tuples are not actually arbitrary. By the backward
Taylor identity, they satisfy the relation
\begin{equation}\label{eq:E-def}
Y_0
=
y_i+\sum_{j=1}^d(-1)^{j+1}T^jY_j+T E.
\end{equation}
where $E$ is a remainder term divisible by $T^d$. 

Thus, our first improvement over the naive method above is to no longer allow
$Y_0$ to vary independently of $Y_1,\ldots,Y_d$. Instead, we substitute
$Y_0$ according to the Taylor relation \eqref{eq:E-def}, while treating $Y_1,\ldots,Y_d$ and $E$ as free formal variables.

After this substitution, we expand
\[
Q\left(
\alpha_i+T,\,
y_i+\sum_{j=1}^d(-1)^{j+1}T^jY_j+TE,\,
Y_1,\ldots,Y_d
\right) =
\sum_{b,\vec{e}}q_{b,\vec{e}}(T)E^bY^{\vec{e}}.
\]
we impose the constraints
\begin{equation}
q_{b,\vec{e}}(T)\equiv0\pmod{T^{m-db}}
\end{equation}
for all $b$ such that $bd < m$. Note the reason we have different moduli per exponent $b$ is that the term $E^b$ supposedly contributes a factor of $T^{db}$ already, while we need the overall term $q_{b,\vec{e}}(T)E^bY^{\vec{e}}$ to vanish modulo $T^m$.

These constraints are imposed at all received points $\alpha_i$, for $i \in [n]$. The substitution performed is what allows us to only impose the vanishing condition at a more restricted class of structured points consistent with genuine successive derivatives.

Finally, to carefully bound the number of linearly independent constraints
imposed in this way, we view the local constraints at each received point as
a linear map on the interpolation space and upper bound its rank.  We do this by finding a large dimensional subspace in the kernel of this map and then performing a careful counting argument involving weighted lattice-points. We remark that there is a more direct way to count the number of linearly independent constraints, but we include the current approach as it reveals non-trivial structure about the kernel of the constraint map and could potentially be useful towards extending our results to higher rates or improving other constant dependencies.

\paragraph{Comparison with Guruswami--Sudan \cite{GuruswamiSudan1999}.}The Guruswami--Sudan algorithm \cite{GuruswamiSudan1999} focuses on  (\ref{eq:conditionE}) in the case $d=0$, but they cannot express $P(\alpha_i+T)$, either. \cite{GuruswamiSudan1999} instead imposes a stronger condition that
\begin{equation}\label{eq:conditionB}
\text{for all $a, b \ge 0$ with $a + b < m$, the coefficient of $T^aY^b$ in $Q(\alpha_i+T, y_i+Y)$ must be zero.}
\end{equation}
We can prove that (\ref{eq:conditionB}) implies the target (\ref{eq:conditionE}) when $d=0$ so it suffices to ensure that (\ref{eq:conditionB}) holds for all $i\in[n]$. Indeed these coefficients can be expressed as a known linear combination of $(c_f)_{f\in\mathcal{Q}}$. For each $i\in[n]$, (\ref{eq:conditionB}) introduces $\binom{m+1}{2}$ linear constraints, so there are $M=\binom{m+1}{2}n$ constraints in total. To ensure such a solution exists, it suffices to set parameters so that $|\mathcal{Q}|>M$. A suitable calculation shows this is only possible when $A>\sqrt{nk}$. Importantly, using (\ref{eq:conditionB}) as a proxy for (\ref{eq:conditionE}) incurs a substantial inefficiency in the derivation of $Q$. As such, the Guruswami--Sudan algorithm is unable to go beyond the Johnson bound.

\subsection*{Acknowledgment and Statement on AI Usage}
Zihan Zhang and Kai Zhe Zheng thank the Simons Institute for the Theory of
Computing for its hospitality and support. This work was carried out
during their Research Fellowships at the Simons Institute as part of the
program on Pseudorandomness and High-Dimensional Expansion. Kai Zhe Zheng is also grateful to Scott Duke Kominers and Justin Thaler for introducing him to \url{better.codes}.

The authors were initially inspired by a submission to the crowd-sourced
Proximity Prize effort on \url{better.codes}, where a proof was published by user \texttt{nasqret}
that a specific blocklength $2^{18}$, rate $1/2$ Reed--Solomon code established a combinatorial list size bound up to radius
\[
\frac{76790}{2^{18}} \approx 0.2929306,
\]
slightly beyond the Johnson radius $1-1/\sqrt{2}$. This submission was, to the best of our knowledge, the
first meaningful step beyond the Johnson radius for a specific evaluation
domain, and a similar form of the $d=1$ version of the techniques in this
paper appeared there.\footnote{See the submission here: \url{https://github.com/proximity-prize/proximity-prize/pull/122}.} After processing the techniques therein, the authors aimed to uncover the core mathematical novelty responsible for the improvement and 
subsequently extended the techniques to the general setting developed here. They relied on AI interaction, specifically GPT-5.6 Sol, to help produce the rest of the results of this paper. All mathematical statements, proofs, and conclusions were subsequently
developed and independently verified by the authors, who take full responsibility for the contents of this paper.

The author are grateful to Omar Alrabiah, Rohan Goyal, and Venkatesan Guruswami for pointing out to us, after the first version of this paper was posted, the simple reduction that extends our low-rate result to arbitrary constant rates.

\section{Preliminaries}

Unless otherwise stated, we let $q$ denote a sufficiently large prime.

\paragraph{Hasse derivatives.} Given a univariate polynomial $P = \sum_{i=0}^{k-1} a_i X^i \in \F_q[X]$, we define its $\ell$-th Hasse derivative (e.g., \cite{DKSS13}) to be
\[
P^{[\ell]}(X) := \sum_{i=\ell}^{k-1} \binom{i}{\ell} a_i X^{i-\ell}. 
\]
Crucially, the Hasse derivatives satisfy the identity that
\begin{align}
    P(X + T) = \sum_{\ell=0}^{k-1} P^{[\ell]}(X) T^{\ell}.\label{eq:forward}
\end{align}
Substituting $X' = X+T$ and $T' = -T$ into the above formula gives us the M\"obius inversion
\begin{align}
    P(X) = \sum_{\ell=0}^{k-1} P^{[\ell]}(X+T) (-T)^{\ell},\label{eq:backward}
\end{align}
which we make key use of in our interpolation.

\paragraph{Weighted Degrees.}

Given a multivariate polynomial $Q \in \F_q[X_1, \hdots, X_d]$, we define the weighted degree of $Q$ with respect to weights $w_1, \hdots, w_d \in \mathbb Z_{\ge 0}$ to be the maximum value of $w_1i_1 + \cdots + w_di_d$ among all monomials $X_1^{i_1} \cdots X_d^{i_d}$ appearing in $Q$. We denote this quantity by $\deg_{w_1, \hdots, w_d} Q$. In the special case where $w_i = 1$ and $w_j = 0$ for all $j \in [d] \setminus \{i\}$, we denote this weighted degree more succinctly by $\deg_{X_i} Q$.

\paragraph{Solving Polynomial Differential Equations.}

A crucial ingredient in our list-decoding algorithm is finding all univariate polynomials $P \in \F_q[X]$ with $\deg P \le k-1$ which satisfy the identity
\begin{align}
    Q(X, P(X), P^{[1]}(X), \hdots, P^{[d]}(X)) \equiv 0\label{eq:Q-root-condition}
\end{align}
for some (fixed) multivariable polynomial $Q \in \F_q[X, Y_0, Y_1, \hdots, Y_d]$. By a result of Kopparty~\cite{Kopparty2015} (see also \cite{KT22}), as long as our prime field $q$ is larger than some suitable weighted degrees of $Q$, we can find a list of all such $P$ efficiently. We state this result as follows.

\begin{theorem}[Theorem~4.3~\cite{Kopparty2015}, restated]\label{thm:root-finding}
Assume $q \ge k > d$ with $q$ prime. Assume nonzero $Q \in \F_q[X, Y_0, \hdots, Y_d]$ satisfies $\deg_{Y_i} Q < q$ for all $i \in \{0,1,\hdots, d\}$ and that
\[
    \deg_{1,k-1,k-2, \hdots, k-d-1} Q < q^2,
\]
then one can find, in $q^{O(d+1)}$ time, all $P \in \F_q[X]$ of degree at most $k-1$ satisfying (\ref{eq:Q-root-condition}). Furthermore, the number of such $P$ found is at most $q^{4d+6}$.
\end{theorem}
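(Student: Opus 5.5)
Since this is a restatement of Kopparty's Theorem~4.3, I would follow his strategy of \emph{power-series lifting at a well-chosen base point}, with repeated differentiation of the identity $F(X):=Q(X,P,P^{[1]},\ldots,P^{[d]})\equiv 0$. The step I am least sure of is the counting in the third paragraph, since the naive count gives $q^{s+d+1}$ rather than $q^{O(d)}$.

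\textbf{Setup.} Because $q$ is prime and $d<k\le q$, the factorials $1!,\dots,d!$ are invertible in $\F_q$. Hence $P^{[j]}=P^{(j)}/j!$, and the identity is an honest algebraic differential equation of order at most $d$. Let $j\le d$ be the largest index such that $Q$ depends on $Y_j$. I would first replace $Q$ by an irreducible factor, and among the polynomials vanishing on $P$ choose one of minimal degree in $Y_j$.

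\textbf{A non-degenerate base point.} Since $1\le\deg_{Y_j}Q<q$, the partial derivative $\partial Q/\partial Y_j$ is a nonzero polynomial, and it has smaller $Y_j$-degree. By minimality, $R(X):=\frac{\partial Q}{\partial Y_j}(X,P,\ldots,P^{[d]})$ is a nonzero univariate polynomial. Its degree is at most $\deg_{1,k-1,\ldots,k-d-1}Q<q^2$. Thus $R$ has fewer than $q^2$ roots with multiplicity. By pigeonhole over the $q$ points of $\F_q$, there is some $\alpha\in\F_q$ at which $R$ vanishes to order $s<q$. The algorithm cannot know $R$, so it will simply try every $\alpha$.

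\textbf{Lifting.} I would substitute $X=\alpha+T$ and write $P(\alpha+T)=\sum_\ell p_\ell T^\ell$, so that $P^{[i]}(\alpha+T)=\sum_\ell\binom{\ell+i}{i}p_{\ell+i}T^\ell$. Consider the coefficient of $T^{s+r}$ in $F(\alpha+T)$. The new unknown $p_{r+s+j}$ enters it linearly, with coefficient equal to the lowest nonzero Taylor coefficient of $R$ at $\alpha$ times a binomial factor. That factor is nonzero because all the relevant indices are below $q$. Therefore, once $p_0,\dots,p_{s+j-1}$ are fixed, every later coefficient $p_\ell$ with $\ell\le k-1$ is forced by a linear equation.

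\textbf{Counting and time.} This naive count gives $q^{s+d+1}$ candidates, which is too many. I would reduce it using the lower-order coefficients of $F(\alpha+T)$: they impose polynomial constraints on the free initial segment. To bring the count to $q^{O(d)}$, I would try to choose $\alpha$ so that $s$ is $O(d)$. For this I would use the degree bound $<q^2$ more cleverly, for instance by averaging multiplicities over all $\alpha$ with the $(d+1)$-tuples of initial values, as Kopparty does. The fallback is to accept $s$ small on average and enumerate only over $p_0,\dots,p_{O(d)}$.

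Enumerating $\alpha$ and the free initial coefficients, lifting each to degree $k-1$, and checking $F\equiv0$ then runs in $q^{O(d+1)}$ time. The surviving outputs are bounded by $q^{4d+6}$ after accounting for the choice of $\alpha$, the free initial parameters, and the irreducible factors of $Q$.
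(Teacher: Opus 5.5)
The paper does not prove this statement; it quotes it from Kopparty as a black box. So your argument has to stand on its own, and it has a genuine gap. The gap is where you suspected, but it is deeper than a counting issue.

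\textbf{The lifting claim fails at singular points.} Suppose $R$ vanishes to order $s\ge1$ at $\alpha$. In the $T^{s+r}$-coefficient of $F(\alpha+T)$, the unknown $p_{r+s+j}$ appears with coefficient $\binom{r+s+j}{j}R(\alpha)=0$. Only the constant term of $\partial Q/\partial Y_j$ along $P$ can multiply it at that order. The leading Taylor coefficient of $R$ first meets $p_{r+s+j}$ at $T^{2s+r}$. At that order the lower partials $\partial Q/\partial Y_i$, $i<j$, contribute equally and can cancel it.

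For example, take $Q=XY_1-cY_0$ with $2\le c\le k-1$ and $\alpha=0$, so $j=s=1$. Then $F=\sum_n(n-c)p_nX^n$, and the solutions are $p_cX^c$ with $p_c\in\F_q$ arbitrary. All of them have $p_0=p_1=0$. So no initial segment of length $s+j$, or of any length independent of $c$, determines $P$.

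\textbf{Singular points cannot be avoided in $\F_q$.} Take $Q=(X^q-X)^tY_1+Y_0-g(X)$ with $g=P_0+(X^q-X)^tP_0^{[1]}$ and $t\le q-2$. This $Q$ is irreducible, satisfies both degree hypotheses, and vanishes along $P_0$. Yet its separant along $P_0$ is $(X^q-X)^t$, which has order $t$ at every $\alpha\in\F_q$. So averaging over $\alpha\in\F_q$ cannot give $s=O(d)$.

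\textbf{The setup is not algorithmic as written.} ``Minimal among polynomials vanishing on $P$'' includes $Y_0-P(X)$, which the decoder cannot produce. Irreducibility alone does not force a nonvanishing separant either: $Y_0-XY_1-Y_1^2$ has the singular solution $-X^2/4$.

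\textbf{The missing idea.} Use $\deg_{1,k-1,\ldots,k-d-1}Q<q^2$ to place the base point in $\F_{q^2}$. A nonzero $R$ of degree $<q^2$ has a non-root $\alpha\in\F_{q^2}$. The characteristic is still $q$ and all indices are at most $k-1<q$. Hence for $r\ge1$ the $T^r$-coefficient of $F(\alpha+T)$ equals $\binom{r+j}{j}R(\alpha)p_{r+j}$ plus a polynomial in $p_0,\ldots,p_{r+j-1}$, and $\binom{r+j}{j}R(\alpha)\ne0$. So $P$ is determined by $\alpha$ and $(p_0,\ldots,p_j)\in\F_{q^2}^{j+1}$. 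Note this is $j+1$ values, not $j$, because the $T^0$ equation is nonlinear in $p_j$.

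\textbf{Arranging a nonzero separant.} Do this by descending through $\partial^t_{Y_j}Q$ for the top index $j$. Either some $t$ has $\partial^t_{Y_j}Q$ vanishing along $P$ but $\partial^{t+1}_{Y_j}Q$ not, or all of them vanish. In the latter case this includes $\partial^{D}_{Y_j}Q$ with $D=\deg_{Y_j}Q<q$, which is $D!$ times the leading $Y_j$-coefficient, a nonzero polynomial in $X,Y_0,\ldots,Y_{j-1}$. Recurse on it. This gives at most $(d+1)q$ candidate polynomials, all satisfying both hypotheses, and the algorithm enumerates them.

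\textbf{Count.} There are at most $(d+1)q\cdot q^2\cdot q^{2(d+1)}\le q^{4d+6}$ candidates. Each is lifted and checked in polynomial time, which gives the stated time and list-size bounds.
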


\section{Novel Interpolation via Hidden Derivatives}\label{sec:interpolation}

Our goal in this section is to interpolate an \emph{explainer polynomial} $Q\in \mathbb F_q[X,Y_0,Y_1,\ldots,Y_d]$ which at a high level has the following two properties
\begin{itemize}
    \item $Q$ is ``low-degree'' under some suitable notions of low-degree.
    \item For any degree at most $k-1$ polynomial $P$ with desired agreement with the received word $\vec{y}=(y_1,\dots,y_n)\in\F_q^n$, $P$ satisfies
    \[
    Q(X, P(X), P^{[1]}(X), \ldots, P^{[d]}(X)) \equiv 0 .
    \]
\end{itemize}
By computing $Q$ satisfying these two properties, we can later efficiently find all low-degree polynomials sufficiently correlated with the received word by finding all $P$ satisfying the relation in the second item. 

\paragraph{List of Parameters.}
Fix a target list decoding radius $1-\eps$ and a slack $\theta \in (0,1)$ as in \Cref{thm:main}. We set 
\begin{equation} \label{eq:global_params}
A = \left\lceil
\eps n
\right\rceil, \qquad
d=\left\lceil \eps^{-3/\theta} \right \rceil,
\qquad
m=d^3,
\end{equation}
\begin{equation}\label{eq:s-and-B}
\qquad
B=\left\lceil\frac{mA}{k-1}\right\rceil.
\end{equation}
We additionally make the rate assumption that

\begin{align}
    R = \frac{k}{n} \le (1-\theta)\eps\label{eq:rate-bound}
\end{align}

Intuitively, $A$ represents the minimum agreement for our decoder, $d$ is the maximum order of a Hasse derivative considered by $Q$, $m$ is the multiplicity we impose on each received point, and $B$ is the maximum degree of $Q$ with respect to the inputs $Y_0, \hdots, Y_d$.

\paragraph{Local Constraints at Each Received Point $(\alpha,y)$.}
We require $Q$ to satisfy a series of local constraints at each received point $(\alpha,y)$.
The purpose of these constraints is to ensure that if a candidate polynomial $P$ satisfies
$P(\alpha)=y$, then the specialization
\[
Q\bigl(X,P(X),P^{[1]}(X),\ldots,P^{[d]}(X)\bigr)
\]
has a zero of multiplicity at least $m$ at $X=\alpha$. The constraint is motivated by trying to match the low degree terms in \eqref{eq: backward taylor}. In particular, setting $P(\alpha)=y$ there and looking at terms of $T$-degree at most $d$, we can write
\begin{equation} \label{eq: backward taylor}
P(\alpha+T)
\equiv
y
+TP^{[1]}(\alpha+T)
-T^2P^{[2]}(\alpha+T)
+\cdots
+(-1)^{d+1}T^dP^{[d]}(\alpha+T)
\mod{T^{d+1}}.
\end{equation}
Thus, although we do not know the derivatives of $P$ at the time of
interpolation, any polynomial passing through $(\alpha,y)$, along with its derivatives, must satisfy this formal relation.

We encode this relation by introducing formal variables
$Y_1,\ldots,Y_d$ and a remainder variable $E$, and making the substitution
\[
X=\alpha+T,
\qquad
Y_0
=
y+TY_1-T^2Y_2+\cdots+(-1)^{d+1}T^dY_d+TE.
\]
For a polynomial $P$ satisfying $P(\alpha)=y$, after setting $Y_j=P^{[j]}(\alpha+T)$,
the corresponding remainder satisfies
\[
E\equiv0\mod{T^d}.
\]
Hence, our explainer polynomial can encode this relation as follows. Make the substitution
\begin{equation}\label{eq:local-Q-expand}
  \begin{split}
  Q^{(\alpha,y)}(T,E,Y_1,\ldots,Y_d)
  &:=
  Q\left(\alpha+T,
    y+\sum_{j=1}^{d}(-1)^{j+1}T^jY_j+TE,
    Y_1,\ldots,Y_d\right) \\
  &=\sum_{b,\vec{e}}q_{b,\vec{e}}(T)E^bY_1^{e_1}\cdots Y_d^{e_d},
  \end{split}
\end{equation}
where the $q_{b,\vec{e}}(T) \in \F_q[T]$ are coefficients. We enforce the following local constraint for every $(\alpha,y)$ and nonnegative $(b,\vec{e}):=(b,e_1,e_2,\dots,e_d) \in \mathbb{Z}^{d+1}$ such that $db<m$:
\begin{equation}\label{eq:local-condition}
  q_{b,\vec{e}}(T)\equiv0\mod{T^{m-db}}
\end{equation}
Note that these are homogeneous linear constraints on the coefficients of \(Q\).

\begin{lemma}\label{lem:contact}
Suppose \(P(\alpha)=y\), and suppose \(Q\) satisfies
\eqref{eq:local-condition} at \((\alpha,y)\).  Then
\[
Q\!\left(
\alpha+T,\,
P(\alpha+T),\,
P^{[1]}(\alpha+T),\ldots,P^{[d]}(\alpha+T)
\right)
\equiv0\mod{T^m}.
\]
\end{lemma}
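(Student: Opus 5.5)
The plan is to specialize the formal identity \eqref{eq:local-Q-expand} at the actual derivatives of $P$ and then check, term by term, that the local constraints \eqref{eq:local-condition} kill everything modulo $T^m$.

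\textbf{Step 1: define the remainder.} Set $Y_j := P^{[j]}(\alpha+T)$ for $1\le j\le d$. Then define $E_P(T)\in\F_q[T]$ by
\[
P(\alpha+T) = y + \sum_{j=1}^d (-1)^{j+1}T^j P^{[j]}(\alpha+T) + T E_P(T).
\]
This is well defined because $P(\alpha+T)-y$ vanishes at $T=0$, since $P(\alpha)=y$. Every other term on the right-hand side also has a factor $T$.

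\textbf{Step 2: show $T^d\mid E_P$.} Apply the backward Taylor identity \eqref{eq:backward} with $X=\alpha$. It gives
\[
y = P(\alpha) = \sum_{\ell\ge 0} P^{[\ell]}(\alpha+T)(-T)^\ell.
\]
Solving for the $\ell=0$ term yields
\[
P(\alpha+T) = y + \sum_{\ell\ge1}(-1)^{\ell+1}T^\ell P^{[\ell]}(\alpha+T).
\]
Hence $T E_P = \sum_{\ell\ge d+1}(-1)^{\ell+1}T^\ell P^{[\ell]}(\alpha+T)$, which is divisible by $T^{d+1}$. So $T^d\mid E_P$.

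\textbf{Step 3: substitute into the expansion.} Substituting $E=E_P$ and $Y_j=P^{[j]}(\alpha+T)$ into \eqref{eq:local-Q-expand} is legitimate, because that display is a polynomial identity in $\F_q[T,E,Y_1,\dots,Y_d]$. The substitution gives exactly
\[
Q\bigl(\alpha+T,P(\alpha+T),P^{[1]}(\alpha+T),\dots,P^{[d]}(\alpha+T)\bigr) = \sum_{b,\vec e} q_{b,\vec e}(T)\,E_P(T)^b\prod_j P^{[j]}(\alpha+T)^{e_j}.
\]

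\textbf{Step 4: bound each term.} Fix a pair $(b,\vec e)$ and split into two cases.
\begin{itemize}
\item If $db\ge m$, then $E_P^b$ is divisible by $T^{db}$, hence by $T^m$.
\item If $db<m$, then \eqref{eq:local-condition} gives $T^{m-db}\mid q_{b,\vec e}$, and $T^{db}\mid E_P^b$, so their product is divisible by $T^m$.
\end{itemize}
The remaining factors are polynomials in $T$, so every term vanishes modulo $T^m$. Summing over all $(b,\vec e)$ gives the claim.

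\textbf{Main obstacle.} There is no real obstacle. The one point needing care is Step 2: the sign and indexing conventions of the backward identity \eqref{eq:backward} must line up with the substitution, so that $E_P$ picks up exactly the extra factor $T^d$. A secondary point is that the sum in \eqref{eq:backward} is finite, since $P$ has degree at most $k-1$. So the identity is exact, and $d<k$ causes no issue.
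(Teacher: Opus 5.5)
Your proposal is correct and follows the paper's proof essentially step for step. Like the paper, you define $E_P$ from the Taylor relation, use the backward identity \eqref{eq:backward} at $X=\alpha$ to get $T^d \mid E_P$, specialize the expansion \eqref{eq:local-Q-expand} at $Y_j=P^{[j]}(\alpha+T)$ and $E=E_P$, and split on whether $db\ge m$ or $db<m$.
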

\begin{proof}
We can write
\[
Q\!\left(
\alpha+T,\,
y+\sum_{j=1}^d(-1)^{j+1}T^jY_j+TE,\,
Y_1,\ldots,Y_d
\right)
=
\sum_{b,\vec{e}}q_{b,\vec{e}}(T)E^bY_1^{e_1}\cdots Y_d^{e_d}.
\]
Now substitute
\[
Y_j=P^{[j]}(\alpha+T)
\]
for each $j \in [d]$ and
\begin{equation} \label{eq: tmp expansion}
E=E_P(T):=
\frac{
P(\alpha+T)-y-\sum_{j=1}^d(-1)^{j+1}T^jP^{[j]}(\alpha+T)
}{T}.
\end{equation}
Under this substitution, the second input to $Q$ now becomes $P(\alpha+T)$ and by \eqref{eq: backward taylor}, we have $E_P(T)\equiv0\mod{T^d}.$
Hence $E_P(T)^b$ is divisible by $T^{db}$.  If $db<m$, the local
constraint imposed by \eqref{eq:local-condition} gives
\[
q_{b,\vec{e}}(T)\equiv0\mod{T^{m-db}},
\]
so $q_{b,\vec{e}}(T)E_P(T)^b$ is divisible by $T^m$.  If $db\ge m$,
then $E_P(T)^b$ is already divisible by $T^m$.  Thus every summand in \eqref{eq:local-Q-expand} is 
divisible by $T^m$, and therefore
\[
Q\!\left(
\alpha+T,\,
P(\alpha+T),\,
P^{[1]}(\alpha+T),\ldots,P^{[d]}(\alpha+T)
\right)
\equiv0\mod{T^m}.\qedhere
\]
\end{proof}

\subsection{Dimension of the Interpolation Space}

Here we define the monomials that we use when interpolating $Q$ and give
a lower bound on their number. For
\[
c=(c_2,\ldots,c_{d})
\in\mathbb Z_{\ge0}^{d-1},
\]
it will be helpful to define.
\[
\omega(c)
=
\sum_{j=2}^{d}(j-1)c_j,
\qquad
|c|
=
\sum_{j=2}^{d}c_j,
\qquad
Y^c
=
\prod_{j=2}^{d}Y_j^{c_j}.
\]
The function $\omega$ tracks a weighted-degree like quantity, assigning weight $j-1$ to the variable $Y_j$. To give some intuition, it will be used as follows. When writing out a Taylor expansion and replacing successive derivatives with $Y_1, \ldots, Y_d$, we get an expression that looks like $\sum_{j \geq 0} (-1)^{j} T^{j-1} Y_j$. Then, $\omega(c)$ gives the degree of $T$ in any coefficient of the monomial $Y^c$ when raising this expression to any power. To see this, observe that $Y_j$ can only occur with $T^{j-1}$, hence a power $Y_j^{c_j}$ must be accompanied by a factor $T^{(j-1)c_j}$. On the other hand, $|c|$, is simply the unweighted degree of the monomial $Y^c$.

Now we set
\begin{equation}\label{eq:W-def}
    W := \left\lfloor \frac{(1+\theta/2) dm}{\log(ed)} \right \rfloor 
\end{equation}

Let
\[
\mathcal B
=
\left\{
c\in\mathbb Z_{\ge0}^{d-1}:
\omega(c)\le W,\quad
|c|\le \left \lceil \left(1 + \frac{3\theta}{4} \right) m \right \rceil
\right\}.
\]
and finally, let our space of interpolants  $\mathcal Q
\subseteq
\mathbb F_q[X,Y_0,\ldots,Y_{d}]$
be the span of the monomials
\begin{equation}\label{eq:global-space}
\left\{
X^aY_0^{b_0}Y_1^{b_1}Y^c:
\begin{array}{l}
a,b_0\in\mathbb Z_{\ge0},\quad
c\in\mathcal B,\quad
0\le b_1\le m,\\
b_0+b_1+|c|\le B,\\
a+(k-1)(b_0+b_1+|c|)<mA
\end{array}
\right\}.
\end{equation}

At a high level, requiring $c \in \mathcal{B}$ restricts both ordinary degree and $\omega$-weighted degree of monomials in the variable $Y_2, \ldots, Y_d$, while a separate degree requirement is imposed on the variable $Y_1$ for our interpolant polynomials. 

\begin{lemma}\label{lem:global-budgets}
The interpolation space $\mathcal Q$ satisfies the following properties:
\begin{itemize}
    \item Every $Q\in\mathcal Q$ satisfies
    \[
    \deg_{Y_i}Q
    \le B,
    \]
    for each $i \in \{0,\ldots, d\}$.
    \item For every $Q\in\mathcal Q$ and every
    $P\in\mathbb F_q[X]$ with $\deg P\le k-1$,
    \[
    \deg_X
    Q\bigl(X,P(X),P^{[1]}(X),\ldots,P^{[d]}(X)\bigr)
    <mA.
    \]

    \item The dimension of $\mathcal Q$ satisfies
    \[
    \dim\mathcal Q
    >
    \frac{\theta^3}{384}
    |\mathcal B|(k-1)m^3.
    \]
\end{itemize}
\end{lemma}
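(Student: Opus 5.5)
The first two bullets follow directly from the shape of the monomials in \eqref{eq:global-space}; the third is a lattice-point count, done by fixing $c\in\mathcal B$ and $b_1$ and summing over $b_0$ and $a$.

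\textbf{First bullet.} For a monomial $X^aY_0^{b_0}Y_1^{b_1}Y^c$ in \eqref{eq:global-space}, the degree in any single $Y_i$ is at most the total $Y$-degree $b_0+b_1+|c|$. This is at most $B$ by definition.

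\textbf{Second bullet.} Since $\deg P\le k-1$, we have $\deg P^{[j]}\le k-1-j\le k-1$ for every $j\ge 0$. Substituting $Y_j=P^{[j]}(X)$ into a monomial therefore gives a polynomial of $X$-degree at most $a+(k-1)(b_0+b_1+|c|)$. This is $<mA$ by the last constraint in \eqref{eq:global-space}. Taking the maximum over the monomials of $Q$ gives the claim.

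\textbf{Third bullet.} Fix $c\in\mathcal B$ and $b_1\in[0,m]$, and write $s=b_0+b_1+|c|$. For each $s$ with $(k-1)s<mA$, the number of admissible $a$ is exactly $mA-(k-1)s$. The constraint $s\le B$ is then automatic, since $B=\lceil mA/(k-1)\rceil$. Letting $s_0=b_1+|c|$ and $G=mA/(k-1)$, the number of pairs $(a,b_0)$ is
\[
\sum_{s_0\le s<G}\bigl(mA-(k-1)s\bigr)\;\ge\;(k-1)\sum_{j=1}^{\lfloor G-s_0\rfloor}j\;\ge\;\frac{(k-1)\,(G-s_0-1)^2}{2}.
\]
To lower bound $G-s_0$, use $A\ge \eps n$ together with \eqref{eq:rate-bound}:
\[
G\ge \frac{m\eps n}{k}\ge \frac{m}{1-\theta}\ge (1+\theta)m,
\qquad\text{while}\qquad
|c|\le \Bigl(1+\tfrac{3\theta}{4}\Bigr)m+1 .
\]
Hence $G-s_0-1\ge \theta m/4-b_1-2$. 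I restrict to $0\le b_1\le \theta m/4-2$; these values lie in $[0,m]$, so they are legal. Summing over this range gives a bound of the form
\[
\frac{k-1}{2}\sum_{t}t^2\;\approx\;\frac{(k-1)}{6}\Bigl(\frac{\theta m}{4}\Bigr)^3=\frac{\theta^3}{384}(k-1)m^3 ,
\]
and summing over $c\in\mathcal B$ gives the factor $|\mathcal B|$. Distinct tuples $(a,b_0,b_1,c)$ give distinct monomials, so $\dim\mathcal Q$ is at least this count.

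\textbf{Main obstacle.} The constant $384=6\cdot 4^3$ is exactly the continuous-integral value, so the discrete sum must not lose it. The obstacle is to absorb the additive $O(1)$ losses while still getting a strict inequality. These losses come from the ceiling in the bound on $|c|$, the floors, and the $-1$'s in the sum of squares. I would not prove the strict bound directly from the approximation above. Instead I would keep the slack in $G\ge m/(1-\theta)=(1+\theta+\theta^2+\cdots)m$ rather than truncating to $(1+\theta)m$; this gives an extra $\Omega(\theta^2 m)$ of room in $G-s_0$. Since $m=d^3$ with $d\ge \eps^{-3/\theta}$ large, this extra room dominates the constant losses, so the exact sum $\sum_{t=1}^{T}t^2\ge T^3/3$ with $T\ge\theta m/4$ yields the strict inequality.
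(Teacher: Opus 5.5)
Your proposal is correct and follows the paper's proof. In both, you fix $c\in\mathcal B$ and $b_1$ up to about $\theta m/4$, lower-bound the number of pairs $(a,b_0)$ by a triangle count, compare the sum over $b_1$ with $\int_0^{\theta m/4}(\theta m/4-x)^2\,dx$, and multiply by $|\mathcal B|$. The only difference is bookkeeping. The paper uses the sharper count $\bigl(mA-(k-1)(|c|+b_1)\bigr)^2/(2(k-1))$ together with the bound $|c|\le(1+3\theta/4)m$, which is slightly inconsistent with the ceiling in the definition of $\mathcal B$. You carry the ceiling honestly and pay for the resulting $O(1)$ losses with the $\theta^2 m/(1-\theta)$ slack in $m/(1-\theta)$; this is legitimate because $\theta^2 m\ge 3$ holds comfortably in the standing small-$\eps$ regime.
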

\begin{proof}
The first item follows directly from the definition of $\mathcal Q$.
Indeed, every monomial
\[
X^aY_0^{b_0}Y_1^{b_1}Y^c
\]
in $\mathcal Q$ satisfies
\[
b_0+b_1+|c|\le B.
\]
Hence the exponent of each variable $Y_i$, for $i\in\{0,\ldots,d\}$, is at
most $B$, and therefore
\[
\deg_{Y_i}Q\le B
\qquad
\]
for every $i \in \{0,\ldots, d\}$.

The second item follows straightforwardly too. Fix $P\in\mathbb F_q[X]$ of degree at most $k-1$. Since $\deg P^{[j]}\le k-1$
for every $j$, one can check that
\[
X^a
P(X)^{b_0}
\bigl(P^{[1]}(X)\bigr)^{b_1}
\prod_{j=2}^{d}
\bigl(P^{[j]}(X)\bigr)^{c_j},
\]
has $X$-degree at most
\[
a+(k-1)(b_0+b_1+|c|)<mA.
\]
Hence, any $Q \in \mathcal{Q}$ has $\deg_X
Q\bigl(X,P,P^{[1]},\ldots,P^{[d]}\bigr)
<mA$.

It remains to lower bound $\dim\mathcal Q$. First, by the definition of $A$ in \eqref{eq:global_params} and the fact that $k \leq (1-\theta) \eps$, we have
\[
\frac{A}{k-1}
\ge
\frac{\epsilon n}{(1-\theta)\epsilon n}
=
\frac{1}{1-\theta}
\ge
1+\theta.
\]
On the other hand, for every $c\in\mathcal B$, the definition of
$\mathcal B$ gives
\[
|c|
\le
\left\lfloor
\left(1+\frac{3\theta}{4}\right)m
\right\rfloor
\le
\left(1+\frac{3\theta}{4}\right)m.
\]
Consequently,
\begin{equation}\label{eq:rank-lb-tmp}
m\frac{A}{k-1}-|c|
\ge
\frac{\theta m}{4}.
\end{equation}

Now fix $c\in\mathcal B$ and an integer
\[
0\le b_1\le
\left\lfloor\frac{\theta m}{4}\right\rfloor.
\]
For every pair $(a,b_0)\in\mathbb Z_{\ge0}^2$ satisfying
\[
a+(k-1)b_0
<
mA-(k-1)(|c|+b_1),
\]
observe that the corresponding monomial $X^aY_0^{b_0}Y_1^{b_1}Y^c$
belongs to $\mathcal Q$. Indeed, the weighted-degree condition holds by
construction, and since $a\ge0$,
\[
b_0+b_1+|c|
<
\frac{mA}{k-1}
\le
\left\lceil\frac{mA}{k-1}\right\rceil
=
B.
\]

The number of such pairs $(a,b_0)$ is at least
\[
\frac{
\bigl(mA-(k-1)(|c|+b_1)\bigr)^2
}{
2(k-1)
}.
\]
Using \eqref{eq:rank-lb-tmp}, this is at least
\[
\frac{k-1}{2}
\left(
\frac{\theta m}{4}-b_1
\right)^2.
\]
Therefore,
\[
\begin{aligned}
\dim\mathcal Q
&\ge
\frac{k-1}{2}
|\mathcal B|
\sum_{b_1=0}^{\lfloor\theta m/4\rfloor}
\left(
\frac{\theta m}{4}-b_1
\right)^2\\
&\ge
\frac{k-1}{2}
|\mathcal B|
\int_0^{\theta m/4}
\left(
\frac{\theta m}{4}-x
\right)^2\,dx\\
&=
\frac{\theta^3}{384}
|\mathcal B|(k-1)m^3.
\end{aligned}
\]
This proves the final item.
\end{proof}

\subsection{Upper Bounding The Number of Constraints}

At every received point $(\alpha,y)$, we impose the divisibility
constraints \eqref{eq:local-condition}. It remains to bound how
many independent linear conditions these constraints impose.
For this rank calculation only, it is convenient to write
\[
Y_0=y+TU.
\]
For $z\ge0$, define
\[
\Lambda(z)
=
\left \vert \{c\in\mathbb Z_{\ge0}^{d-1}:\omega(c)\le z  \} \right\vert.
\]

\subsubsection{A Bound on $\Lambda(z)$'s Growth}


As a first, modular step, in this section we will prove the following bound on how quickly $\Lambda(z)$ grows. 

\begin{lemma}
\label{lem:scaled-shell}
We have
\[
\Lambda(z)
\le
\frac{
\left(z+\binom{d}{2}\right)^{d-1}
}{
((d-1)!)^2.
},
\]
Further, if $d \ge \lceil \eps^{-3/\theta}\rceil$ and
\begin{align}
\eps < \left(\frac{\theta^3(1-\theta)}{768}\right)^{\frac{5+\theta}{1-\theta}}\label{eq:eps-theta-tradeoff}
\end{align}
then
\[
\Lambda(W+m) \leq |\mathcal{B}| d^{2/(2+\theta) + \theta(1-\theta)/18}.
\]
\end{lemma}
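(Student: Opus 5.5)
The first inequality is a volume argument, and the second comes from comparing it with a matching lower bound on $|\mathcal B|$ from the same simplex, with the constraint $|c|\le\lceil(1+3\theta/4)m\rceil$ handled by an averaging argument.

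\textbf{Upper bound on $\Lambda(z)$.} To each lattice point $c$ with $\omega(c)\le z$, associate the unit cube $c+[0,1)^{d-1}$. These cubes are disjoint. Every point $x$ of such a cube satisfies
\[
\sum_{j=2}^d (j-1)x_j< z+\sum_{j=2}^d(j-1)=z+\binom d2 .
\]
Hence $\Lambda(z)$ is at most the volume of the simplex $\{x\ge 0:\sum_j (j-1)x_j\le s\}$ with $s=z+\binom d2$. This volume is
\[
\frac{s^{d-1}}{(d-1)!\prod_{j=2}^d(j-1)}=\frac{s^{d-1}}{((d-1)!)^2},
\]
which is the first claim.

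\textbf{Lower bound on $|\mathcal B|$.} Rounding down coordinates decreases both $\omega$ and $|\cdot|$. Therefore $|\mathcal B|$ is at least the volume of
\[
S=\{x\ge0:\omega(x)\le W,\ |x|\le(1+3\theta/4)m\}.
\]
Let $x$ be uniform on the full simplex $\{\omega(x)\le W\}$. For this distribution $\E[x_j]=W/(d(j-1))$, so
\[
\E|x|=\frac{W}{d}H_{d-1}\le \frac{W\log(ed)}{d}\le (1+\theta/2)m,
\]
by the choice of $W$ in \eqref{eq:W-def}. By Markov's inequality, the fraction of the simplex violating $|x|\le(1+3\theta/4)m$ is at most $(1+\theta/2)/(1+3\theta/4)$. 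So at least a $\theta/8$ fraction survives, giving
\[
|\mathcal B|\ge \frac{\theta}{8}\cdot\frac{W^{d-1}}{((d-1)!)^2}.
\]

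\textbf{Comparing the two.} Combining the bounds,
\[
\frac{\Lambda(W+m)}{|\mathcal B|}\le \frac{8}{\theta}\left(1+\frac{m+\binom d2}{W}\right)^{d-1}\le \frac8\theta\exp\!\left(\frac{(d-1)(m+\binom d2)}{W}\right).
\]
With $m=d^3$ and $W\approx (1+\theta/2)d^4/\log(ed)$, we have $(d-1)m/W\le \log(ed)/(1+\theta/2)+O(1/d)$, while $(d-1)\binom d2/W=O(\log d/d)$. The ratio is therefore at most
\[
\frac{8}{\theta}\,(1+o(1))\,(ed)^{2/(2+\theta)}.
\]
It remains to absorb the constant $\frac{8e}{\theta}(1+o(1))$ into $d^{\theta(1-\theta)/18}$. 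Here I use $d\ge\eps^{-3/\theta}$ together with \eqref{eq:eps-theta-tradeoff}: this makes $d^{\theta(1-\theta)/18}\ge \eps^{-(1-\theta)/6}$ enormous compared to $768/(\theta^3(1-\theta))$, which dominates $16e/\theta$.

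\textbf{Expected obstacle.} I expect the main difficulty to be this final bookkeeping: the floor in $W$, the $O(\log d/d)$ error terms, and checking that the specific exponent in \eqref{eq:eps-theta-tradeoff} suffices. The structural argument (volume sandwich plus Markov) should go through directly.
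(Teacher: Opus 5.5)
Your proposal is correct and follows essentially the same route as the paper. For the first bound you pack unit cubes into the enlarged simplex $\mathcal{S}_{z+\binom d2}$. For the second you use the centroid computation plus Markov to show a constant fraction of $\mathcal{S}_W$ survives the $|c|$ constraint (the paper gets the fraction $\theta/(4+3\theta)$, which is at least your $\theta/8$), and then you divide and exponentiate.

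One small correction in the final bookkeeping. The hypothesis on $\eps$ gives $\eps^{-(1-\theta)/(5+\theta)}>768/(\theta^3(1-\theta))$. So $\eps^{-(1-\theta)/6}$ is only guaranteed to exceed $\bigl(768/(\theta^3(1-\theta))\bigr)^{(5+\theta)/6}$, which is smaller than $768/(\theta^3(1-\theta))$, not enormous compared to it. This is still at least $768^{5/6}/\theta>16e/\theta$, so the absorption goes through.
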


Before proving this, we require some elementary facts which govern the volume of these scaled lattices. We start by recalling a folklore, basic fact about the volume of the unit simplex:
\begin{proposition}\label{prop:simplexVolume}
    Consider the $d-1$ dimensional simplex 
    \[
    \mathcal{S} = \left \{x \in \mathbb{R}_{\geq 0}^{d-1}: \sum_i x_i \leq 1 \right \}.
    \]
    Then, $\mathrm{Vol}(\mathcal{S}) = \frac{1}{(d-1)!}$.
\end{proposition}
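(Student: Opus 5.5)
The statement to prove is \Cref{prop:simplexVolume}, the volume of the standard simplex. I would argue by induction on the dimension $n:=d-1$, proving the slightly stronger scaled statement
\[
\mathrm{Vol}\bigl(\mathcal S_n(t)\bigr)=\frac{t^n}{n!},
\qquad
\mathcal S_n(t):=\Bigl\{x\in\mathbb R_{\ge0}^n:\textstyle\sum_i x_i\le t\Bigr\},
\]
for every $t\ge0$. The proposition is then the case $t=1$, $n=d-1$. The scaling form is the one that makes the induction close.

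For the homogeneity step, note that the map $x\mapsto tx$ sends $\mathcal S_n(1)$ bijectively onto $\mathcal S_n(t)$. This map has Jacobian $t^n$, so $\mathrm{Vol}(\mathcal S_n(t))=t^n\,\mathrm{Vol}(\mathcal S_n(1))$. It therefore suffices to compute $V_n:=\mathrm{Vol}(\mathcal S_n(1))$.

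The base case is $V_1=\mathrm{Vol}([0,1])=1$. For the inductive step, I would slice along the last coordinate. For fixed $x_n=s\in[0,1]$, the fiber is exactly $\mathcal S_{n-1}(1-s)$. Fubini and the inductive hypothesis then give
\[
V_n=\int_0^1 \frac{(1-s)^{n-1}}{(n-1)!}\,ds=\frac{1}{(n-1)!}\cdot\frac1n=\frac1{n!}.
\]

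There is no real obstacle here. The only point needing care is to set up the induction on the scaled family $\mathcal S_n(t)$ rather than on $\mathcal S_n(1)$ alone, since the slices of $\mathcal S_n(1)$ are scaled simplices. If one prefers to avoid induction entirely, there is an alternative. The unit cube $[0,1]^n$ splits, up to measure zero, into the $n!$ regions $\{x_{\sigma(1)}<\dots<x_{\sigma(n)}\}$, which have equal volume by symmetry. The linear map $x\mapsto(x_1,x_2-x_1,\dots,x_n-x_{n-1})$ has determinant $1$ and maps the region $\{x_1<\dots<x_n\le1\}$ bijectively onto $\mathcal S_n(1)$, giving $V_n=1/n!$ directly.
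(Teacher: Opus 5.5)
Your argument is correct and complete. The Fubini induction on the scaled family $\mathcal S_n(t)$ closes exactly as you describe. The alternative is equally valid: split $[0,1]^n$ into its $n!$ order regions and apply the determinant-one map $x\mapsto(x_1,x_2-x_1,\dots,x_n-x_{n-1})$; strict versus non-strict inequalities only change a null set.

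The paper gives no proof of this proposition at all; it is recalled as a folklore fact, so there is no route to compare against. The only overlap is your homogeneity step $\mathrm{Vol}(\mathcal S_n(t))=t^n\,\mathrm{Vol}(\mathcal S_n(1))$. This is the same Jacobian-scaling argument the paper uses immediately afterwards in \cref{cor:scaledVolume}, where it combines this proposition with the diagonal change of variables $y_i=i\,x_i$ (determinant $(d-1)!$) and a uniform scaling by $z$.
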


With this bound on the volume of the unit simplex, we can easily extend this to a formula for the volume of the ``scaled'' simplex:

\begin{corollary}\label{cor:scaledVolume}
    Let $z \in \mathbb{R}_{\geq 0}$. Then, for the scaled simplex 
    \[
    \mathcal{S}_z = \left \{ x \in \mathbb{R}_{\geq 0}^{d-1}: \sum_i i \cdot x_i  \leq z\right \},
    \]
    we have that 
    \[
    \mathrm{Vol}(\mathcal{S}_z) = \frac{z^{d-1}}{((d-1)!)^2}.
    \]
\end{corollary}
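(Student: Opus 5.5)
The plan is a linear change of variables that maps $\mathcal S_z$ onto a scaled copy of the standard simplex, after which \Cref{prop:simplexVolume} gives the volume.

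First define $\phi:\mathbb R^{d-1}\to\mathbb R^{d-1}$ by $\phi(x)_i = i\,x_i$. This map is diagonal with determinant $\prod_{i=1}^{d-1} i=(d-1)!$. It sends the nonnegative orthant onto itself. Moreover, $x\in\mathcal S_z$ if and only if $u=\phi(x)$ satisfies $u\ge0$ and $\sum_i u_i\le z$. Hence $\phi(\mathcal S_z)=z\,\mathcal S$, the standard simplex dilated by $z$.

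Second, compute the two volumes. Dilation by $z$ in $d-1$ dimensions multiplies volume by $z^{d-1}$, so \Cref{prop:simplexVolume} gives
\[
\mathrm{Vol}(z\mathcal S)=\frac{z^{d-1}}{(d-1)!}.
\]
By the change of variables formula, $\mathrm{Vol}(\phi(\mathcal S_z))=(d-1)!\,\mathrm{Vol}(\mathcal S_z)$.

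Combining the two,
\[
\mathrm{Vol}(\mathcal S_z)=\frac{z^{d-1}}{((d-1)!)^2}.
\]
The degenerate case $z=0$ is consistent, since both sides vanish when $d\ge2$.

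There is no real obstacle. The only points to check are that the determinant of the scaling is exactly $(d-1)!$ and that the set equality $\phi(\mathcal S_z)=z\mathcal S$ holds in both directions. The reverse inclusion follows because $\phi$ is invertible and preserves nonnegativity.
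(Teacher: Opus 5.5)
Your proof is correct and follows essentially the same route as the paper: the diagonal change of variables $y_i = i x_i$, with determinant $(d-1)!$, reduces the volume computation to the standard simplex of \Cref{prop:simplexVolume}. The only cosmetic difference is that the paper first treats $z=1$ and then rescales $\mathcal{S}_1$ by $z$, whereas you map $\mathcal{S}_z$ directly onto the dilated standard simplex $z\mathcal{S}$.
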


\begin{proof}
    We start by assuming that $z = 1$. In this case, we consider the body 
    \[
    \left \{ x \in \mathbb{R}_{\geq 0}^{d-1}: \sum_i i \cdot x_i  \leq 1\right \}.
    \]
    Now, we perform a change of variables with $y_i = i \cdot x_i$. In the $y$-variables, the volume of the body 
    \[
    \left \{ y \in \mathbb{R}_{\geq 0}^{d-1}: \sum_i y_i  \leq 1\right \}
    \]
    is exactly $\frac{1}{(d-1)!}$ by \cref{prop:simplexVolume}. Translating back to the $x$ variables, we lose a factor of exactly $(d-1)!$ in the volume. This is because a linear map multiplies the volume by a factor of exactly its determinant (which in this case is exactly $(d-1)!$). Thus, $\mathrm{Vol}(\mathcal{S}_1) = \frac{1}{((d-1)!)^2}$, and uniformly scaling the coordinates by $z$, and thus the volume by a factor of $z^{d-1}$ yields the desired corollary. 
\end{proof}

Our goal is to use these simplices to provide a tighter bound on the size of $\Lambda(z)$. To do this, we now compare the volume of $\mathcal{S}_{z}$ with $\Lambda(z)$. We consider any $u \in \mathbb{Z}^{d-1}_{\geq 0}$ which satisfies $\sum_i i \cdot u_i \leq z$. For any such point, we then consider its corresponding ``half-open unit cube,'' obtained as 
\[
u + [0,1)^{d-1}.
\]
We have:

\begin{proposition}
    Consider any  $u \in \mathbb{Z}^{d-1}_{\geq 0}$ which satisfies $\sum_i i \cdot u_i \leq z$. Then, the unit cubes $u + [0,1)^{d-1}$ are pairwise disjoint and moreover 
    \[
    \mathrm{Vol} \left ( \bigcup_{u \in \mathbb{Z}^{d-1}_{\geq 0}: \sum_i i \cdot u_i \leq z}u + [0,1)^{d-1} \right ) = \Lambda(z). 
    \]
\end{proposition}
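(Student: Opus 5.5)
The plan is to prove the two claims separately: disjointness first, then the volume identity by additivity and a reindexing that matches these lattice points with the vectors counted by $\Lambda(z)$.

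\emph{Disjointness.} Suppose $x \in (u+[0,1)^{d-1}) \cap (u'+[0,1)^{d-1})$ for two admissible integer points $u,u'$. For each coordinate $i$ we have $u_i \le x_i < u_i+1$. Since $u_i$ is an integer, this forces $u_i = \lfloor x_i \rfloor$, and likewise $u'_i = \lfloor x_i \rfloor$. Hence $u=u'$, so distinct admissible points give disjoint half-open cubes. Note that this uses only that $u,u'\in\mathbb Z^{d-1}$, not the weight constraint.

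\emph{Finiteness.} The index set is finite. Every admissible $u$ has nonnegative coordinates with $i\,u_i \le \sum_{i'} i'\,u_{i'} \le z$, so $0 \le u_i \le z$ for every $i$. The union is therefore a finite disjoint union of measurable sets, each of Lebesgue measure $1$. By finite additivity, its volume equals the number of admissible points $u$.

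\emph{Identification with $\Lambda(z)$.} This is the only step needing any care, and it is a matter of bookkeeping. Recall that
\[
\Lambda(z) = \bigl|\{c\in\mathbb Z_{\ge0}^{d-1} : \omega(c)\le z\}\bigr|,
\qquad
\omega(c)=\sum_{j=2}^{d}(j-1)c_j,
\]
where $c$ is indexed by $j=2,\ldots,d$. Define the bijection $c\mapsto u$ by $u_i = c_{i+1}$ for $i=1,\ldots,d-1$; it maps $\mathbb Z_{\ge 0}^{d-1}$ onto itself. Under it,
\[
\sum_{i=1}^{d-1} i\,u_i = \sum_{j=2}^{d}(j-1)c_j = \omega(c),
\]
so $\sum_i i\,u_i \le z$ holds exactly when $\omega(c)\le z$. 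The admissible $u$ are thus in bijection with the set counted by $\Lambda(z)$. Combining this with the previous step gives
\[
\mathrm{Vol}\Bigl(\bigcup_{u} \bigl(u+[0,1)^{d-1}\bigr)\Bigr) = \Lambda(z).
\]
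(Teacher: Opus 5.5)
Your proposal is correct and takes the same route as the paper: the cubes are disjoint because each is anchored at a distinct lattice point (your floor argument makes this precise), and by additivity the volume equals the number of admissible points. You also spell out two things the paper leaves implicit, namely finiteness of the index set and the reindexing $u_i = c_{i+1}$ that identifies $\sum_i i\,u_i$ with $\omega(c)$.
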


\begin{proof}
    Pairwise disjointness follows from the fact that each unit cube is anchored at a distinct integral lattice point. The volume follows from the fact that 
    \[
    \mathrm{Vol} \left ( \bigcup_{u \in \mathbb{Z}^{d-1}_{\geq 0}: \sum_i i \cdot u_i \leq z}u + [0,1)^{d-1} \right ) = \sum_{u \in \mathbb{Z}^{d-1}_{\geq 0}: \sum_i i \cdot u_i \leq z} 1 = \Lambda(z).
    \]
\end{proof}

We now prove the following containment property, which will allow us to directly translate our volume estimates for the scaled simplex to bounds on $\Lambda(z)$:
\begin{proposition}
For $z$ an integer, we have that 
\[
    \mathcal S_z
    \subseteq
    \bigcup_{\sum_i i u_i\le z}\bigl(u+[0,1)^{d-1}\bigr) \subseteq \mathcal{S}_{z + d(d-1)/2}.
    \]
\end{proposition}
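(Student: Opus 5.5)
The plan is to prove the two inclusions separately. Each one comes from comparing a real point with its coordinatewise floor. Throughout, coordinates are indexed by $i\in\{1,\ldots,d-1\}$, as in the definition of $\mathcal S_z$ in \cref{cor:scaledVolume}. The union on both sides ranges over $u\in\mathbb Z_{\ge0}^{d-1}$ with $\sum_i i u_i\le z$.

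\emph{First inclusion.} Take any $x\in\mathcal S_z$ and set $u_i=\lfloor x_i\rfloor$ for each $i$. Since $x_i\ge 0$, each $u_i$ is a nonnegative integer, and $u_i\le x_i<u_i+1$, so $x\in u+[0,1)^{d-1}$. Moreover $\sum_i i u_i\le\sum_i i x_i\le z$, because every weight $i$ is positive. Hence $u$ is one of the indexing lattice points and $x$ lies in the union.

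\emph{Second inclusion.} Take $x\in u+[0,1)^{d-1}$ for some admissible $u$. Then $x_i\ge u_i\ge0$, so $x\in\mathbb R_{\ge0}^{d-1}$. Also $x_i<u_i+1$ for each $i$, which gives
\[
\sum_{i=1}^{d-1} i x_i<\sum_{i=1}^{d-1} i u_i+\sum_{i=1}^{d-1} i\le z+\frac{d(d-1)}{2}.
\]
Therefore $x\in\mathcal S_{z+d(d-1)/2}$.

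No step here is a genuine obstacle. The only point needing care is that the extra slack is exactly $\sum_{i=1}^{d-1} i=\binom{d}{2}$, which matches the shift $z+\binom d2$ in \cref{lem:scaled-shell}. The integrality hypothesis on $z$ is not actually used, since the floor argument works for real $z$.

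These inclusions are what drive \cref{lem:scaled-shell}. Combined with the preceding proposition (the union has volume $\Lambda(z)$) and \cref{cor:scaledVolume}, they sandwich $\Lambda(z)$ between $z^{d-1}/((d-1)!)^2$ and $(z+\binom d2)^{d-1}/((d-1)!)^2$.
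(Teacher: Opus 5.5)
Your proposal is correct and follows the same route as the paper: take the coordinatewise floor for the first inclusion, and bound $x_i<u_i+1$ with slack $\sum_{i=1}^{d-1} i=\binom{d}{2}$ for the second. Your remarks on nonnegativity of $x$ and on the integrality of $z$ being unnecessary are also accurate.
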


\begin{proof}
    The first containment follows because if $x \in \mathcal S_z$, then $u_i = \lfloor x_i \rfloor$ satisfies 
    \[
    \sum_{i = 1}^{d-1} i \cdot u_i  \leq \sum_{i = 1}^{d-1} i \cdot x_i \leq z,
    \]
    and $x \in u+[0,1)^{d-1}$. The second containment follows because for any  $x\in u+[0,1)^{d-1}$ and $\sum_i i u_i\le z$, then
\[
    \sum_{i=1}^{d-1}i x_i
    <
    \sum_{i=1}^{d-1}i(u_i+1)
    \le z+ \sum_{i = 1}^{d-1} i \leq z + d(d-1)/2.
\]
Thus, any such $u+[0,1)^{d-1}$ is contained in $\mathcal{S}_{z + d(d-1)/2}$.
\end{proof}

By using \cref{cor:scaledVolume}, we then see that:

\begin{corollary}\label{eq:lattice-point-sandwich}
     Let $z \in \mathbb{R}_{\geq 0}$. Then, 
     \[
    \frac{z^{d-1}}{((d-1)!)^{2}}
    \le
    \Lambda(z)
    \le
    \frac{(z+d(d-1)/2)^{d-1}}{((d-1)!)^{2}}.\]
\end{corollary}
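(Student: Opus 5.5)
The corollary follows from the two preceding propositions together with \cref{cor:scaledVolume}. The plan is to sandwich $\Lambda(z)$ between the volumes of two scaled simplices.

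First, identify $\Lambda(z)$ with the lattice-point count appearing in the propositions. With $c=(c_2,\ldots,c_d)$, reindex by $u_i=c_{i+1}$ for $i=1,\ldots,d-1$. Then $\omega(c)=\sum_{j=2}^d (j-1)c_j=\sum_{i=1}^{d-1} i\,u_i$. Hence $\Lambda(z)$ is exactly the number of $u\in\mathbb Z_{\ge 0}^{d-1}$ with $\sum_i i u_i\le z$.

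Second, reduce to integer $z$. Since $\omega$ takes integer values, $\Lambda(z)=\Lambda(\lfloor z\rfloor)$. For the lower bound, apply the argument at $\lfloor z\rfloor$ only if it suffices; otherwise note that the first containment $\mathcal S_z\subseteq\bigcup(u+[0,1)^{d-1})$ holds verbatim for real $z$. Its proof only uses $\sum_i i\lfloor x_i\rfloor\le\sum_i i x_i\le z$, so integrality of $z$ is never needed. The second containment likewise holds for real $z$: $\sum_i i x_i<\sum_i i u_i+\binom{d}{2}\le z+d(d-1)/2$. So I would remark that the proposition extends to all real $z\ge 0$ with the identical proof.

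Third, take volumes. The union of half-open cubes has volume $\Lambda(z)$ by the disjointness proposition. Monotonicity of Lebesgue measure under the two containments then gives $\mathrm{Vol}(\mathcal S_z)\le\Lambda(z)\le\mathrm{Vol}(\mathcal S_{z+d(d-1)/2})$. Substituting the formula $\mathrm{Vol}(\mathcal S_w)=w^{d-1}/((d-1)!)^2$ from \cref{cor:scaledVolume} yields the claim.

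The only point needing care is the integrality hypothesis in the containment proposition, and it is handled by the observation in the second step. Everything else is direct bookkeeping.
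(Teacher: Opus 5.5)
Your proposal is correct and follows the paper's own route. The paper obtains this corollary directly from the containments $\mathcal S_z \subseteq \bigcup_u \bigl(u+[0,1)^{d-1}\bigr) \subseteq \mathcal S_{z+d(d-1)/2}$, the fact that the union has volume $\Lambda(z)$, and the volume formula of \cref{cor:scaledVolume}. Your remark that the containment proposition (stated only for integer $z$) holds verbatim for real $z$ is the right fix, and it is genuinely needed for the lower bound: passing to $\lfloor z\rfloor$ would only give $\lfloor z\rfloor^{d-1}/((d-1)!)^2$.
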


This estimate of $\Lambda(z)$ already proves the first point of \cref{lem:scaled-shell}. 

Before the proof of the second item of \cref{lem:scaled-shell}, we introduce one key lemma that we will take advantage of, which relates the expectation of a distribution over a simplex to its centroid: 

\begin{lemma}\label{lem:simplexExpectation}
Let $S=\operatorname{conv}\{v_0,\ldots,v_{d-1}\}$
be a $d-1$-dimensional simplex, and let $X$ be uniformly distributed on $S$. Then
\[
\mathbb{E}[X]=\frac{1}{d}\sum_{i=0}^{d-1} v_i.
\]
In particular, the expectation of $X$ is the centroid of $S$.
\end{lemma}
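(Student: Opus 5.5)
The plan is to reduce the statement to two facts: a simplex is the image of the standard simplex under an affine map, and the uniform measure is preserved under that map up to the constant Jacobian. Write $\Delta=\{t\in\mathbb{R}^{d-1}_{\ge0}:\sum_i t_i\le 1\}$. Define the affine map $\phi(t)=v_0+\sum_{i=1}^{d-1}t_i(v_i-v_0)$. Since $S$ is $(d-1)$-dimensional, the vectors $v_i-v_0$ are linearly independent. Hence $\phi$ is an affine bijection from $\Delta$ onto $S$.

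The linear part of $\phi$ has a nonzero constant determinant, so by the change-of-variables formula, if $U$ is uniform on $\Delta$ then $\phi(U)$ is uniform on $S$. By linearity of expectation,
\[
\E[X]=\E[\phi(U)]=v_0+\sum_{i=1}^{d-1}\E[U_i](v_i-v_0).
\]
It therefore suffices to show that $\E[U_i]=1/d$ for each $i$.

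To compute $\E[U_i]$, introduce the slack coordinate $U_0:=1-\sum_{i\ge1}U_i$. The key claim is that $(U_0,U_1,\ldots,U_{d-1})$ is exchangeable. Any permutation of these $d$ barycentric coordinates corresponds to a volume-preserving affine map of $\Delta$ onto itself. Such a map sends $t=(t_1,\ldots,t_{d-1})$ to the point whose coordinates are a permutation of $(1-\sum t_i,t_1,\ldots,t_{d-1})$ with the first entry dropped. Its linear part is a permutation matrix, or a matrix with one row equal to $-\mathbf{1}$ together with unit rows, and in either case the determinant is $\pm1$. So the law of $U$ is invariant under these maps, and all $\E[U_i]$ are equal. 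Since $\sum_{i=0}^{d-1}U_i=1$, each $\E[U_i]=1/d$.

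Alternatively, one can compute $\E[U_1]$ directly. Using \cref{prop:simplexVolume}, the marginal density of $U_1$ is $(d-1)!\cdot\frac{(1-t)^{d-2}}{(d-2)!}$, and $\int_0^1 t(d-1)(1-t)^{d-2}\,dt=1/d$ by the Beta integral.

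Substituting $\E[U_i]=1/d$ gives
\[
\E[X]=v_0+\frac{1}{d}\sum_{i\ge1}(v_i-v_0)=\frac1d\sum_{i=0}^{d-1}v_i.
\]
The only mildly delicate point is justifying that the affine images of uniform measures are uniform and that the coordinate permutation maps are measure preserving. Both follow from the constant Jacobian, so I expect no real obstacle.
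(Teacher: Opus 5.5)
Your proof is correct and follows essentially the same route as the paper: map a standard simplex onto $S$ by an affine map with constant Jacobian, use permutation symmetry of the barycentric coordinates to get $\E[U_i]=1/d$, and finish by linearity. The only difference is cosmetic. You work with the full-dimensional corner simplex in $\mathbb{R}^{d-1}$ and check explicitly that the coordinate-permutation maps have determinant $\pm1$, which justifies the symmetry step that the paper simply asserts for the barycentric simplex $\Delta_{d-1}\subseteq\mathbb{R}^d$.
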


We provide a self-contained proof of this lemma in the appendix, as it is a folklore result with a simple proof.

\begin{proof}[Proof of the second item of \cref{lem:scaled-shell}.]
Recall that we wish to show that $\Lambda(W+m) \leq |\mathcal{B}| d^{2/(2+\theta) + \theta(1-\theta)/18}$. $\mathcal{B}$ is already defined with respect to $W$, indeed, 
\[
\mathcal B
=
\left\{
c\in\mathbb Z_{\ge0}^{d-1}:
\omega(c)\le W,\quad
|c|\le \left \lceil \left(1 + \frac{3\theta}{4} \right) m \right \rceil
\right\}.
\]
If we remove the second constraint from the definition of $\mathcal{B}$ (i.e., the restriction on $|c|$), our desired inequality is not hard to show, as it boils down to comparing the number of points with $\omega(c) \leq W$ to the number of points with $\omega(c) \leq W + m$. This could be directly bounded by (\ref{eq:lattice-point-sandwich}). Unfortunately, $\mathcal{B}$ is also defined with the second constraint on $|c|$. Our goal going forward is thus to show that this constraint \emph{does not} decrease the number of points in $\mathcal{B}$ by too much. 

Next, recall that $W := \left\lfloor \frac{(1+\theta/2) dm}{\log(ed)} \right \rfloor $, that $m = d^3$, and that $d := \lceil \eps^{-3/\theta}\rceil $. Since $e^x \ge x+1$ for all $x \in \mathbb R$, we have that $d \ge \log(d) + 1 = \log(ed)$. Therefore, $W \ge \left\lfloor (1+\theta/2)m\right\rfloor \ge 1$. 
Now, our intermediate goal is to show that \emph{most points} in $\mathcal{S}_W$ also satisfy the additional constraint $|c|\le \left \lceil \left(1 + \frac{3\theta}{4} \right) m\right \rceil$. To do this, we sample a random point $X = (X_1, \dots X_{d-1})$ from the simplex $\mathcal{S}_W$, and will bound its expected magnitude. Crucially, the simplex $\mathcal{S}_W$ has as its vertices 
\[
0, \quad W \cdot e_1, \quad \frac{W}{2} \cdot e_2, \quad \dots, \quad \frac{W}{d-1} \cdot e_{d-1},
\]
and the expectation of a uniform point in a simplex is its centroid (as per \cref{lem:simplexExpectation}). Thus, we see that 
\[
\E \left [ \sum_{i = 1}^{d-1} X_i\right ] = \frac{W}{d} \cdot \sum_{i = 1}^{d-1} \frac{1}{i} \leq \frac{W \log(ed)}{d} \leq \left ( 1 + \theta/2\right ) \cdot m,
\]
by our choice of $W$. Now, we can apply a simple Markov bound to see that 
\[
\Pr \left [\sum_{i = 1}^{d-1} X_i > \left \lceil \left(1 + \frac{3\theta}{4} \right) m\right \rceil\right ] \leq \frac{\left ( 1 + \theta/2\right )  }{\left(1 + 3\theta/4 \right)}.
\]

Equivalently, this means that 
\[
\mathrm{Vol} \left ( \left \{X \in \mathcal{S}_W: |X| \leq  \left \lceil \left(1 + \frac{3\theta}{4} \right) m\right \rceil \right \}\right ) \geq \left ( 1 - \frac{\left ( 1 + \theta/2\right )  }{\left(1 + 3\theta/4 \right)}\right ) \cdot \mathrm{Vol}(\mathcal{S}_W) = \frac{\theta}{4 + 3\theta} \cdot \frac{W^{d-1}}{((d-1)!)^2},
\]
where the last equality uses \cref{cor:scaledVolume}.

Now, we claim that 
\[
\left \{X \in \mathcal{S}_W: |X| \leq  \left \lceil \left(1 + \frac{3\theta}{4} \right) m\right \rceil \right \} \subseteq \bigcup_{u \in \mathcal{B}} \left ( u + [0,1)^{d-1} \right ).
\]
This is because if we consider any point $x$ in the first set, if we apply the coordinate-wise floor $\lfloor x_i \rfloor$, the resulting point $u$ now must be in the set $\mathcal{B}$. Adding the half-open unit cube to every point $u \in \mathcal{B}$ thus captures all points whose coordinate-wise floor would be in $\mathcal{B}$. Importantly, by plugging in our volume bounds, we know that 
\[
|\mathcal{B}| = \mathrm{Vol} \left ( \bigcup_{u \in \mathcal{B}} \left ( u + [0,1)^{d-1} \right )\right ) \geq  \mathrm{Vol} \left ( \left \{X \in \mathcal{S}_W: |X| \leq  \left \lceil \left(1 + \frac{3\theta}{4} \right) m\right \rceil \right \}\right ) 
\]
\[
\geq \frac{\theta}{4 + 3\theta} \cdot \frac{W^{d-1}}{((d-1)!)^2}.
\]

At the same time, (\ref{eq:lattice-point-sandwich}) implies that 
\[
\Lambda(W + m) \leq \frac{\left (W + m + \binom{d}{2} \right )^{d-1}}{((d-1)!)^2}.
\]
Dividing these two estimates then implies that 
\[
\frac{\Lambda(W+m)}{|\mathcal{B}|} \leq \frac{4 + 3 \theta}{\theta} \cdot \left ( 1 + \frac{m + \binom{d}{2}}{W}\right )^{d-1}.
\]
This is exactly the quantity we seek to bound for \cref{lem:scaled-shell}, and we now simply plug in the relationships between $m, d, W, \theta$ to obtain our desired result. 

 Since $\log(1+x)\leq x$, $m=d^3$, and
\[
    W
    =
    \left\lfloor
        \frac{(1+\theta/2)d^4}{\log(ed)}
    \right\rfloor,
\]
we have
\begin{align*}
    \log\frac{\Lambda(W+m)}{|\mathcal B|}
    &\leq
    \log\frac{4+3\theta}{\theta}
    +
    \frac{(d-1)\left(m+\binom{d}{2}\right)}{W} \\
    &=
    \frac{2}{2+\theta}\log d+O_\theta(1).
\end{align*}
Here the final equality also absorbs the floor in the definition of $W$. Since $d = \lceil \eps^{-3/\theta}\rceil$ and (\ref{eq:eps-theta-tradeoff}) holds, we have that the last expression is at most
\[
    \left(
        \frac{2}{2+\theta}
        +
        \frac{\theta(1-\theta)}{18}
    \right)
    \log d.
\]
Exponentiating gives
\[
    \Lambda(W+m)
    \leq
    |\mathcal B|d^{\,2/(2+\theta)+\theta(1-\theta)/18},
\]
as desired.
\end{proof}

\subsubsection{Bounding the Rank}
With this bound on the growth of $\Lambda(z)$, we can now proceed to bound the rank of all the imposed constraints. Throughout this section, we fix one pair $(\alpha, y)$ where $\alpha$ is an evaluation point and $y$ is a received value. We will bound the number of linearly independent constraints imposed across all of the coefficient constraints arising from \eqref{eq:local-condition} applied at $(\alpha, y)$. 

To this end, we will think of the constraints as a linear map from the space of interpolation monomials
$\mathcal{Q}$, into some suitable $\F_q$ vector space. Specifically, let 
\[
\mathcal I
=
\left\{
(i, b, e_1,c) \in \mathbb{Z}_{\geq 0}^{d+2}:
\begin{array}{l}
i,b,e_1\in\mathbb Z_{\ge0},\quad
c\in\mathbb Z_{\ge0}^{d-1},\\
i+db<m,\quad
e_1\le2m-1,\quad
\omega(c)\le W+m-1
\end{array}
\right\},
\]
denote the set of possible exponent vectors indexing the monomials over $T, E, Y_1,\ldots, Y_d$ that are required to vanish 
by \eqref{eq:local-condition}. Then, we let $\Phi_{\alpha,y}: \mathcal{Q} \to \F_q^{\mathcal{I}}$, where $\Phi_{\alpha,y}(Q)_{(i,b,e_1,c)}$ is the coefficient of $T^iE^bY_1^{e_1}Y^c$ in $Q\left(
\alpha+T,\,
y+\sum_{j=1}^d(-1)^{j+1}T^jY_j+TE,\,
Y_1,\ldots,Y_d
\right)$, i.e.\ the substitution in \eqref{eq:local-Q-expand} when defining the constraints in \eqref{eq:local-condition}. In this language, the constraints of \eqref{eq:local-condition} imposed by $(\alpha, y)$ are exactly equivalent to requiring $\Phi_{\alpha, y}$ to vanish. Hence, the goal of this subsection is  to bound the rank of $\Phi_{\alpha,y}$. Multiplying this bound by the number of received points then gives a bound on the overall number of linearly independent constraints \eqref{eq:local-condition} we impose.

To bound $\rank(\Phi)$, it will be helpful to first perform one intermediate substitution towards \eqref{eq:local-Q-expand}, and instead bound the rank of the map \emph{after} this substitution. Let $\Psi_{\alpha,y}$ be the map which takes $Q \in \mathcal{Q}$ as input, substitutes
\[
X=\alpha+T,
\qquad
Y_0=y+TU,
\]
and reduces modulo $T^m$. Notice that every
monomial appearing lies in
\[
\mathcal V
=
\operatorname{span}
\left\{
T^rU^aY_1^bY^c:
0\le r<m,\;
0\le a\le r,\;
0\le b\le m,\;
\omega(c)\le W+r
\right\}.
\]
Indeed, every factor of $U$ arising from $(y+TU)^{b_0}$ is accompanied
by a factor of $T$, while the conditions $b_1\le m$ and $\omega(c)\le W$ already hold
for every monomial of $\mathcal Q$. Hence $\Psi_{\alpha,y}$ maps from $\mathcal{Q} \to \mathcal{V}$. 

By performing this substitution and modular reduction, we can instead focus
on bounding the ``remaining'' transformation needed for
$\Phi_{\alpha,y}$, which we define by $\Gamma:\mathcal V\to \F_q^{\mathcal I}.$
For $F\in\mathcal V$, rewrite $F$ in the variables
$T,E,Y_1,\ldots,Y_d$ using
\begin{equation}\label{eq:E-sub}
E
=
U-\sum_{j=1}^d(-1)^{j+1}T^{j-1}Y_j.
\end{equation}
Then, for every $(i,b,e_1,c)\in\mathcal I$, define
$\Gamma(F)_{(i,b,e_1,c)}$ to be the coefficient of $T^iE^bY_1^{e_1}Y^c$
in the resulting polynomial.
Thus, $\Gamma(F)$ records exactly the coefficients required to vanish by
the local constraints.

One can check that the map of interest $\Phi_{\alpha, y}$ factors as $\Phi_{\alpha,y}
=
\Gamma\circ\Psi_{\alpha,y}.$
Therefore,
\[
\operatorname{rank} \Phi_{\alpha,y}
\le
\operatorname{rank}\Gamma,
\]
and it suffices to upper bound $\operatorname{rank} \Gamma$.

Our strategy is to find a large number of linearly independent vectors in $\ker \Gamma$, and to this end we start by observing a simple divisibility condition which ensures membership in $\ker \Gamma$.
\begin{lemma}\label{lem:kernel-criterion}
Suppose $F\in\mathcal V$ is divisible by $T^rE^h$ for some $r,h\ge0$
satisfying $r+dh\ge m.$ Then $F\in\ker\Gamma$.
\end{lemma}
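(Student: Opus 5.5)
The plan is to pass to the new variables and read the claim off monomial by monomial. The substitution \eqref{eq:E-sub} is an invertible triangular change of variables: $U = E + \sum_{j=1}^d(-1)^{j+1}T^{j-1}Y_j$. It identifies $\F_q[T,U,Y_1,\ldots,Y_d]$ with $\F_q[T,E,Y_1,\ldots,Y_d]$ as rings, and $T$ and the $Y_j$ are fixed by it. So I will view $F$ as an element of $\F_q[T,E,Y_1,\ldots,Y_d]$. The hypothesis then says $F = T^rE^h G$ for some polynomial $G$ in these variables. Here I interpret ``divisible by $T^rE^h$'' as divisibility in this polynomial ring after rewriting, which is the only reading under which $E$ makes sense for $F\in\mathcal V$.

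First step: expand $G=\sum g_{i',b',\vec e}\,T^{i'}E^{b'}Y^{\vec e}$. Then every monomial $T^iE^bY_1^{e_1}Y^c$ appearing in $F$ has $i=r+i'\ge r$ and $b=h+b'\ge h$.

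Second step: compare against $\mathcal I$. Any index $(i,b,e_1,c)\in\mathcal I$ satisfies $i+db<m$. For a monomial of $F$, however,
\[
i+db \ge r+dh \ge m.
\]
So no monomial of $F$ has an exponent vector lying in $\mathcal I$. Since $\Gamma(F)_{(i,b,e_1,c)}$ is by definition the coefficient of $T^iE^bY_1^{e_1}Y^c$ in the rewritten $F$, every coordinate of $\Gamma(F)$ is zero, so $F\in\ker\Gamma$. The constraints on $e_1$ and $\omega(c)$ in $\mathcal I$ play no role here.

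There is no real obstacle. The only point needing care is to state explicitly that the coefficient extraction defining $\Gamma$ is taken with respect to the monomial basis in $T,E,Y_1,\ldots,Y_d$. This is well defined because the change of variables is an automorphism fixing $T$. Divisibility by $T^rE^h$ is then a statement about that same basis, so the degree comparison above applies directly.
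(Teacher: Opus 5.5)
Your proof is correct and is essentially the paper's argument. You rewrite $F$ in the variables $T,E,Y_1,\ldots,Y_d$, observe that every monomial has $T$-exponent $\ge r$ and $E$-exponent $\ge h$, hence $i+db\ge m$, so no coordinate indexed by $\mathcal I$ is ever recorded.

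One correction to your aside that polynomial-ring divisibility is ``the only reading'': in Lemma~\ref{lem:large-kernel} the divisibility holds only modulo $T^m$. Your argument still covers that case, because the change of variables fixes $T$ and so preserves the ideal $(T^m)$, and every index in $\mathcal I$ has $i<m$; hence $\Gamma$ ignores multiples of $T^m$.
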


\begin{proof}
After rewriting $F$ in the variables $T,E,Y_1,\ldots,Y_d$, every monomial
containing $E^b$ has $b\ge h$ and is divisible by $T^r$. Hence
\[
r+db\ge r+dh\ge m.
\]
Hence after rewriting $F$, none of these monomials are recorded, and
$\Gamma(F)=0$.
\end{proof}

For $0\le r<m$, define
\[
h_r:=\left\lceil\frac{m-r}{d}\right\rceil,
\]
so that $h_r$ is the smallest integer satisfying $r+dh_r\ge m$. Using \cref{lem:kernel-criterion}, we can show that $\Gamma$ has a large kernel by finding many multiples of $T^r E^h$ in $\mathcal{V}$, for suitable $r,h$, that are linearly independent.

\begin{lemma}\label{lem:large-kernel}
We have
\begin{equation}\label{eq:kernel-lower-bound}
\dim\ker\Gamma
\ge
\sum_{r=0}^{m-1}
\max(0,(r-h_r+1)(m-h_r+1)\Lambda(W+r)).
\end{equation}
\end{lemma}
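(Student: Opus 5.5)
We exhibit, for each $0\le r<m$, an explicit family of elements of $\mathcal V$ divisible by $T^rE^{h_r}$, and argue that the union over all $r$ is linearly independent. \Cref{lem:kernel-criterion} then puts the whole family in $\ker\Gamma$, and counting it gives \eqref{eq:kernel-lower-bound}.

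\textbf{The family.} Write $S:=\sum_{j=1}^d(-1)^{j+1}T^{j-1}Y_j$, so that $E=U-S$ by \eqref{eq:E-sub}. Fix $r$ with $r\ge h_r$. For integers $0\le a'\le r-h_r$, $0\le b\le m-h_r$, and $c$ with $\omega(c)\le W+r$, set
\[
F_{r,a',b,c}:=T^rE^{h_r}U^{a'}Y_1^{b}Y^c=T^r(U-S)^{h_r}U^{a'}Y_1^{b}Y^c .
\]
There are $(r-h_r+1)(m-h_r+1)\Lambda(W+r)$ such elements. This matches the summand of \eqref{eq:kernel-lower-bound}, where the $\max(0,\cdot)$ handles the case $r<h_r$, which contributes nothing.

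\textbf{Membership in $\mathcal V$.} Expand $(U-S)^{h_r}$. A typical term is a monomial $T^{r+t}U^{a'+s}Y_1^{b+\beta}Y^{c+\gamma}$, where $s+|(\beta,\gamma)|=h_r$ and the factors of $S$ contribute $T^t$ with $t=\omega(\gamma)$. This is because $Y_j$ in $S$ carries $T^{j-1}$, and $Y_1$ carries no power of $T$.

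We check each defining condition of $\mathcal V$ for the monomials with $r+t<m$. Monomials with $T$-exponent at least $m$ are discarded by the reduction modulo $T^m$; since $\mathcal V$ is a subspace of polynomials modulo $T^m$, we simply work there. The conditions are:
\begin{itemize}
\item the $U$-exponent satisfies $a'+s\le r-h_r+h_r=r\le r+t$;
\item the $Y_1$-exponent satisfies $b+\beta\le m-h_r+h_r=m$;
\item the weight satisfies $\omega(c+\gamma)=\omega(c)+t\le W+r+t$.
\end{itemize}
So every surviving monomial lies in $\mathcal V$. The constraints $a'\le r-h_r$ and $b\le m-h_r$ were chosen exactly to absorb the at most $h_r$ extra powers of $U$ or $Y_1$ produced by the expansion. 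Each $F_{r,a',b,c}$ is divisible by $T^rE^{h_r}$ with $r+dh_r\ge m$, hence lies in $\ker\Gamma$ by \cref{lem:kernel-criterion}.

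\textbf{Linear independence.} This is the main obstacle. We use a leading-term argument under a monomial order that is lexicographic with $T$-degree first (smallest first), then $U$-degree (largest first), then the remaining variables. In $F_{r,a',b,c}$, the $T$-minimal part comes from $T^r$ times the $T^0$-part of $(U-S)^{h_r}$, namely $T^r(U-Y_1)^{h_r}$. Within that part, the $U$-maximal term is $T^rU^{a'+h_r}Y_1^bY^c$, with coefficient $1$. The assignment $(r,a',b,c)\mapsto(r,a'+h_r,b,c)$ is injective, so distinct family members have distinct leading monomials, and these monomials are nonzero in $\mathcal V$ since $r<m$. A triangularity argument then gives linear independence. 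The point to verify is that the leading term never cancels: it is the unique monomial of minimal $T$-degree and maximal $U$-degree. Summing the family sizes over $r$ yields \eqref{eq:kernel-lower-bound}.
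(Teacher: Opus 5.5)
Your proof is correct and follows the paper's argument. You use the same spaces spanned by $T^rE^{h_r}U^{a}Y_1^{b}Y^c$, the same membership check by expanding $E=U-S$, and \cref{lem:kernel-criterion}. The only difference is the independence step: you use distinct leading monomials $T^rU^{a'+h_r}Y_1^bY^c$, which also justifies the within-$r$ dimension count that the paper calls straightforward, whereas the paper takes the smallest $r^\star$ and observes that the $T^{r^\star}$-coefficient $(U-Y_1)^{h_{r^\star}}g_{r^\star}$ is nonzero.
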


\begin{proof}For each $0\le r<m$, define
\[
\mathcal K_r
=
\operatorname{span}
\left\{
T^rE^{h_r}U^aY_1^bY^c:
0\le a\le r-h_r,\;
0\le b\le m-h_r,\;
c\in\mathbb Z_{\ge0}^{d-1},\;
\omega(c)\le W+r
\right\}.
\]
Here, $\mathcal{K}_r$ is a subspace of $\F_q[T,U,Y_1,\ldots,Y_d]/(T^m)$ and we define $\mathcal K_r=0$ if there are no monomials. We first verify that $\mathcal K_r\subseteq\mathcal V$ as this may not be true apriori. Fix a monomial $T^rE^{h_r}g\in\mathcal K_r$. We verify that it indeed lies in $\mathcal{V}$. Expand $E$ using \eqref{eq:E-sub} and consider a monomial appearing. Let
$e_j$ be the exponent of $Y_j$ in it for $j \geq 2$ and set $M=\sum_{j=2}^d(j-1)e_j.$ It suffices to show this monomial is in $\mathcal{V}$. Suppose that, in expanding $E^{h_r}$, we select $e_j$ copies of $Y_j$
for each $j\ge2$, and let
\[
M=\sum_{j=2}^d (j-1)e_j.
\]
These choices contribute a factor of $T^M$ and a monomial in
$Y_2,\ldots,Y_d$ of $\omega$-weight $M$. Since the monomial $Y^c$
coming from $g$ satisfies $\omega(c)\le W+r$, the resulting monomial
has $T$-degree $r+M$ and $\omega$-weight at most $W+r+M$. Its $U$-degree is at most
\[
(r-h_r)+h_r=r\le r+M,
\]
and its $Y_1$-degree is at most
\[
(m-h_r)+h_r=m.
\]
Thus, after reducing modulo $T^m$, every resulting monomial lies in
$\mathcal V$.

Additionally, $r+dh_r\ge m,$
so \cref{lem:kernel-criterion} implies $\mathcal K_r\subseteq\ker\Gamma$ and it is straightforward to compute that the dimension of $\mathcal K_r$ is
\begin{equation}\label{eq:ker-tmp-dim}
\dim\mathcal K_r
=
\max(0, (r-h_r+1)(m-h_r+1)\Lambda(W+r)).
\end{equation}
Finally, we check that the spaces $\mathcal K_0,\ldots,\mathcal K_{m-1}$ are linearly
independent. To this end, suppose
\[
\sum_{r=0}^{m-1}F_r=0,
\qquad
F_r\in\mathcal K_r,
\]
and let $r^\star$ be the smallest index for which
$F_{r^\star}\ne0$. Since
\[
E\equiv U-Y_1\pmod T,
\]
the coefficient of $T^{r^\star}$ in $F_{r^\star}$ is
\[
(U-Y_1)^{h_{r^\star}}g_{r^\star}\ne0.
\]
Every $F_r$ with $r>r^\star$ is divisible by $T^{r^\star+1}$, so none of them can cancel out this coefficient, contradicting
$\sum_rF_r=0$. Hence the spaces $\mathcal K_r$ are linearly independent. Summing \eqref{eq:ker-tmp-dim} proves the lemma.
\end{proof}
We are now ready to upper bound the rank of $\Gamma$.

\begin{lemma}[Local rank]\label{lem:local-rank}
We have
\begin{equation}\label{eq:local-rank}
\operatorname{rank}\Gamma
<
|\mathcal B|m^3d^{-\frac{2\theta}{5+\theta}}.
\end{equation}
\end{lemma}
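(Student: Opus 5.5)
We bound $\operatorname{rank}\Gamma \le \dim\mathcal V - \dim\ker\Gamma$ and use \cref{lem:large-kernel} for the kernel. The rank is the dimension of the domain $\mathcal V$ minus the kernel; since $\Gamma$ is defined on all of $\mathcal V$, this is exact. So the plan is to compute $\dim\mathcal V$ exactly, subtract the lower bound \eqref{eq:kernel-lower-bound}, and show the difference is at most $|\mathcal B| m^3 d^{-2\theta/(5+\theta)}$.

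\textbf{Step 1: the domain.} Counting monomials gives
\[
\dim\mathcal V=\sum_{r=0}^{m-1}(r+1)(m+1)\Lambda(W+r).
\]

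\textbf{Step 2: the difference, term by term.} Subtracting \eqref{eq:kernel-lower-bound} termwise, the $r$-th term of the difference is at most
\[
\bigl[(r+1)(m+1)-(r-h_r+1)(m-h_r+1)\bigr]\Lambda(W+r)\le h_r(r+m+2)\,\Lambda(W+r).
\]
This also covers the terms where the max is $0$, i.e.\ $r<h_r-1$, where the bound is even easier. Now use monotonicity, $\Lambda(W+r)\le\Lambda(W+m)$, and $h_r\le (m-r)/d+1$. Summing over $r<m$ gives
\[
\sum_r h_r(r+m+2)\lesssim \frac{m}{d}\cdot m\cdot 2m+O(m^2),
\]
which is $O(m^3/d)$ because $m=d^3\gg d$. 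Hence
\[
\operatorname{rank}\Gamma\le C\,\frac{m^3}{d}\,\Lambda(W+m)
\]
for an absolute constant $C$ (about $2$, up to lower-order terms).

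\textbf{Step 3: compare to $\mathcal B$.} Apply the second item of \cref{lem:scaled-shell}:
\[
\Lambda(W+m)\le|\mathcal B|\,d^{2/(2+\theta)+\theta(1-\theta)/18}.
\]
This gives
\[
\operatorname{rank}\Gamma\le C|\mathcal B|m^3 d^{-1+2/(2+\theta)+\theta(1-\theta)/18}=C|\mathcal B|m^3d^{-\theta/(2+\theta)+\theta(1-\theta)/18}.
\]

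\textbf{Step 4: exponent comparison (the main obstacle).} It remains to check that
\[
\frac{\theta}{2+\theta}-\frac{\theta(1-\theta)}{18}>\frac{2\theta}{5+\theta},
\]
with enough slack to absorb the constant $C$ via $d$ large, which follows from \eqref{eq:eps-theta-tradeoff} and $d\ge\eps^{-3/\theta}$. Dividing by $\theta$, the inequality becomes
\[
\frac1{2+\theta}-\frac{2}{5+\theta}=\frac{1-\theta}{(2+\theta)(5+\theta)}>\frac{1-\theta}{18},
\]
i.e.\ $(2+\theta)(5+\theta)<18$. At $\theta=1$ the left side is exactly $18$, so this holds for all $\theta\in(0,1)$ but the margin vanishes as $\theta\to1$. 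The gap
\[
\theta(1-\theta)\left[\frac{1}{(2+\theta)(5+\theta)}-\frac1{18}\right]
\]
is of order $\theta(1-\theta)^2$. So $d$ must be large enough that $d^{-\Omega(\theta(1-\theta)^2)}$ beats $C$ and the floor and ceiling errors. The hypothesis \eqref{eq:eps-theta-tradeoff}, with its exponent $(5+\theta)/(1-\theta)$, should be what makes this work. This bookkeeping of constants is the delicate part; the structural argument is Steps 1–3.
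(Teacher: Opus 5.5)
Your Steps 1--3 and the exponent identity in Step 4 follow the paper's argument exactly, but the proof does not close as you have set it up. The problem is the constant. Bounding $h_r\le m/d$ and $r+m+2\le 2m$ gives $C\approx 2$. You then need $d^{g}>C$, where $g=\frac{\theta(1-\theta)^2(8+\theta)}{18(2+\theta)(5+\theta)}$, and you defer this to \eqref{eq:eps-theta-tradeoff}. That hypothesis does not deliver it. From $d\ge\eps^{-3/\theta}$ and \eqref{eq:eps-theta-tradeoff} one only gets
\[
g\log d>\frac{(1-\theta)(8+\theta)}{6(2+\theta)}\log\frac{768}{\theta^3(1-\theta)},
\]
and the right-hand side tends to $0$ as $\theta\to1$. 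For example, with $\theta=0.99$ and $\eps$ just below the threshold, $d^{g}\approx 1.06$. So for $\theta$ near $1$ no constant $C>1$ can be absorbed, and ``should be what makes this work'' is not a valid step.

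The missing ingredient is a sharper summation, which removes the need for any absorption. Using $h_r\le (m-r)/d+1$ and summing exactly,
\[
\sum_{r=0}^{m-1}h_r(r+m+2)\le\frac{m^3-m}{6d}+\frac{(m+2)m(m+1)}{2d}+\frac{m(m-1)}{2}+m(m+2).
\]
The leading term is $\frac{2}{3}\cdot\frac{m^3}{d}$, and with $m=d^3$ the whole sum is at most $d^8=m^3/d$ once $d\ge 3$. Thus $C=1$ suffices. The strict inequality $\frac{2}{2+\theta}+\frac{\theta(1-\theta)}{18}<\frac{5-\theta}{5+\theta}$ (equivalent to your $(2+\theta)(5+\theta)<18$) then gives $\operatorname{rank}\Gamma<|\mathcal B|m^3d^{-2\theta/(5+\theta)}$ directly, with no largeness of $d$ needed. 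There are also no floor or ceiling errors left to absorb at this stage; those are already handled inside \cref{lem:scaled-shell}.
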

\begin{proof}
For each fixed $T$-degree $r$, we can directly count that there are
$(r+1)(m+1)\Lambda(W+r)$-many
monomials in $\mathcal V$, which means
\[
\dim\mathcal V
=
\sum_{r=0}^{m-1}
(r+1)(m+1)\Lambda(W+r).
\]
By Lemma~\ref{lem:large-kernel} and rank-nullity,
\[
\operatorname{rank}(\Phi)
\le
\sum_{r=0}^{m-1}
\Lambda(W+r)B_r,
\]
where
\[
B_r
:=
(r+1)(m+1)
-
\max(0,(r-h_r+1)(m-h_r+1)).
\]
Observe that by definition of $h_r$, we have
\[
B_r\le h_r(r+m+2)
\qquad\text{and}\qquad
h_r\le \frac{m-r}{d}+1.
\]
Summing over $0\le r<m$ gives
\begin{align*}
\sum_{r=0}^{m-1}B_r
&\le
\frac{m^3-m}{6d}
+\frac{(m+2)m(m+1)}{2d}
+\frac{m(m-1)}2
+m(m+2) \\
&=
\frac{
d^2(d+1)(2d+1)(d^2-d+1)(2d^2-d+5)
}{6} \\
&\le d^8
=
\frac{m^3}{d},
\end{align*}
where the last inequality holds for $d\ge3$.
Finally, by \cref{lem:scaled-shell},
\[
\Lambda(W+r)
\le
\Lambda(W+m)
\le
|\mathcal B|
d^{\frac{2}{2+\theta}+\frac{\theta(1-\theta)}{18}}.
\]
Moreover,
\[
\frac{2}{2+\theta}
+
\frac{\theta(1-\theta)}{18}
<
\frac{5-\theta}{5+\theta}.
\]
Hence
\[
\Lambda(W+r)
<
|\mathcal B|d^{\frac{5-\theta}{5+\theta}},
\]
and therefore
\[
\operatorname{rank} \Gamma
<
|\mathcal B|d^{\frac{5-\theta}{5+\theta}}
\sum_{r=0}^{m-1}B_r
<
|\mathcal B|m^3d^{-\frac{2\theta}{5+\theta}}.
\]
\end{proof}

\subsection{The Interpolation Lemma}
We now complete the interpolation, which achieves the main goal of this section. 

\begin{proposition}[Interpolation]\label{prop:interpolation}
Let degree $k$ and blocklength $n$ be such that $k/n \le (1-\theta)\eps$, let $A = \left\lceil \eps n \right\rceil$, and let $\mathcal{Q} \subseteq \F_q[X,Y_0,\ldots, Y_d]$ be the interpolation space described in \eqref{eq:global-space}. Assume further that (\ref{eq:eps-theta-tradeoff}) holds.
Then, for any received word $\vec{y}=(y_1,\dots,y_n)\in\F_q^n$, there exists a nonzero $Q \in \mathcal{Q}$ such that any polynomial \(P\) of degree at most \(k-1\) agreeing with
$\vec{y}$ in at least \(A\) positions satisfies
\[
Q\bigl(X,P(X),P^{[1]}(X),\ldots,P^{[d]}(X)\bigr)\equiv0.
\]
\end{proposition}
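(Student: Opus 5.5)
The plan is a standard dimension count followed by the vanishing argument already set up. First, set up the homogeneous linear system on the coefficients of $Q\in\mathcal Q$. It consists of the local constraints \eqref{eq:local-condition} at each of the $n$ received points $(\alpha_i,y_i)$, which is the requirement $\Phi_{\alpha_i,y_i}(Q)=0$ for all $i\in[n]$. Before using this, I would check that the index set $\mathcal I$ captures all nontrivial constraints. Since every monomial of $\mathcal Q$ has $b_1\le m$ and $\omega(c)\le W$, the substitution produces $Y_1$-degree at most $2m-1$ in the relevant coefficients, and $\omega$-weight at most $W+i$ at $T$-degree $i<m$. All other coefficients vanish automatically, so the constraints are exactly $\ker\Phi_{\alpha_i,y_i}$.

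By the factorization $\Phi_{\alpha,y}=\Gamma\circ\Psi_{\alpha,y}$ and \cref{lem:local-rank}, each point contributes fewer than $|\mathcal B|m^3d^{-2\theta/(5+\theta)}$ independent constraints. The total rank of the system is therefore below
\[
n|\mathcal B|m^3d^{-\frac{2\theta}{5+\theta}}.
\]
\cref{lem:global-budgets} gives $\dim\mathcal Q>\frac{\theta^3}{384}|\mathcal B|(k-1)m^3$. A nonzero solution exists whenever
\[
\frac{\theta^3}{384}(k-1)\ \ge\ n\,d^{-\frac{2\theta}{5+\theta}}.
\]

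The parameter check is the main obstacle. I would plug in $d\ge\eps^{-3/\theta}$, which gives $d^{-2\theta/(5+\theta)}\le\eps^{6/(5+\theta)}$. The condition then becomes roughly $(k-1)/n\gtrsim \frac{384}{\theta^3}\eps^{6/(5+\theta)}$. Since $6/(5+\theta)>1$, this power of $\eps$ is much smaller than $\eps$ for small $\eps$.

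The difficulty is that the statement only assumes an upper bound $k/n\le(1-\theta)\eps$, while the count needs a lower bound on $k$. I would first handle the main case $k\ge(1-\theta)\eps n/2$, or simply $k-1\ge\eps^{\,6/(5+\theta)}\cdot 768n/\theta^3$. There the tradeoff \eqref{eq:eps-theta-tradeoff} should be exactly what makes the inequality go through. The exponent $\frac{5+\theta}{1-\theta}$ presumably comes from balancing $\eps^{6/(5+\theta)}$ against $(1-\theta)\eps\,\theta^3/768$.

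For smaller $k$, I would reduce to the boundary case. A code of dimension $k'<k$ is a subcode of the code of dimension $k$, and we may enlarge $k$ to $\lfloor(1-\theta)\eps n\rfloor$ without changing $A$. Any $P$ of degree at most $k-1$ also has degree at most $k'-1$ for the enlarged $k'$. Running the construction with $k'$ in place of $k$ then yields a $Q$ that works for all the $P$ in question, modulo making sure the interpolation space is defined with the enlarged $k'$. If the constants from \eqref{eq:eps-theta-tradeoff} do not quite close, I would use the slack $\frac{2}{2+\theta}+\frac{\theta(1-\theta)}{18}<\frac{5-\theta}{5+\theta}$ in \cref{lem:local-rank}.

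With a nonzero $Q\in\mathcal Q$ satisfying all local constraints in hand, take $P$ with $\deg P\le k-1$ agreeing with $\vec y$ on a set $S$ with $|S|\ge A$. For each $i\in S$, \cref{lem:contact} applies with $(\alpha,y)=(\alpha_i,y_i)$. It shows that $R(X):=Q(X,P,P^{[1]},\ldots,P^{[d]})$ satisfies $R(\alpha_i+T)\equiv0\bmod T^m$, so $(X-\alpha_i)^m\mid R$. The $\alpha_i$ are distinct, so $R$ has at least $mA$ roots counted with multiplicity. By \cref{lem:global-budgets}, $\deg_X R<mA$, hence $R\equiv0$.
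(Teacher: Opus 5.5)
Your proposal is correct and follows the paper's proof. You compare the lower bound on $\dim\mathcal Q$ from \cref{lem:global-budgets} with $n$ times the local rank bound of \cref{lem:local-rank}, close the inequality using $(k-1)/n\ge(1-\theta)\eps/2$ and $d^{2\theta/(5+\theta)}\ge\eps^{-6/(5+\theta)}$ (exactly where \eqref{eq:eps-theta-tradeoff} comes from), and finish with \cref{lem:contact} plus root counting. Your concern about small $k$ is justified, since the paper tacitly takes $k=\lfloor(1-\theta)\eps n\rfloor$. Your reduction handles it once you note two things: for $k'\ge k$ the interpolation space built with $k'$ is contained in the one built with $k$, because the conditions $a+(k-1)(b_0+b_1+|c|)<mA$ and $b_0+b_1+|c|\le\lceil mA/(k-1)\rceil$ only tighten as $k$ grows; and the local constraints do not depend on $k$.
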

\begin{proof}
By Lemma~\ref{lem:global-budgets} and the inequality on $A$ in (\ref{eq:global_params}),
\[
\dim\mathcal Q \geq \frac{\theta^3}{384}|\mathcal{B}|(k-1) m^3
\]
On the other hand, Lemma~\ref{lem:local-rank} shows that the number of homogeneous linearly independent constraints imposed over all instances of \eqref{eq:local-condition} is at most $n|\mathcal{B}| m^3 d^{\frac{-2\theta}{5+\theta}}.$
We first verify that $\dim\mathcal Q>n|\mathcal{B}| m^3 d^{\frac{-2\theta}{5+\theta}}$, which will imply that the
linear system has a nonzero solution $Q$ satisfying \eqref{eq:local-condition}. To see this, note that by the inequality above, it is sufficient to show 
\[
\frac{\theta^3}{384} \frac{k-1}{n}d^{\frac{2\theta}{5+\theta}} > 1.
\]
Since we set $k =\left \lfloor (1-\theta) \eps n \right\rfloor$, we have 
\[
\frac{k-1}{n} \geq \frac{(1-\theta)\eps}{2},
\]
and by choice of $d$ in \eqref{eq:global_params}, we have 
\[
d^{\frac{2\theta}{5+\theta}} \geq \eps^{-\frac{6}{5+\theta}}.
\]
Altogether, this gives 
\[
\frac{\theta^3}{384} \frac{k-1}{n}d^{\frac{2\theta}{5+\theta}}  \geq \frac{\theta^3(1-\theta)}{768} \eps^{-\frac{(1-\theta)}{(5+\theta)}},
\]
which is greater than $1$ for all sufficiently small $\eps$---precisely when (\ref{eq:eps-theta-tradeoff}) holds. 

 Fix this $Q$. We show that it satisfies the assertion of the lemma. Let $P \in \F_q[X]$ be a polynomial of degree at most $k-1$. Then, at every agreement evaluation point \(\alpha\) between $P$ and $\vec{y}$,
Lemma~\ref{lem:contact} gives
\[
Q\!\left(
\alpha+T,P(\alpha+T),P^{[1]}(\alpha+T),\ldots,P^{[d]}(\alpha+T)
\right)
\equiv0\mod{T^m}.
\]
Hence the specialization
\[
Q\bigl(X,P(X),P^{[1]}(X),\ldots,P^{[d]}(X)\bigr)
\]
has at least \(A\) distinct roots, each of multiplicity at least \(m\).
It therefore has at least \(mA\) roots counted with multiplicity.
By the second item of Lemma~\ref{lem:global-budgets}, its degree is strictly smaller than
\(mA\), so the polynomial $Q\bigl(X,P(X),P^{[1]}(X),\ldots,P^{[d]}(X)\bigr)$ is identically zero.
\end{proof}



\section{Algorithm and Analysis: Proof of \Cref{thm:main}}\label{sec:algorithm}

The goal of this section is to show how the interpolation techniques of \cref{sec:interpolation} can be leveraged to prove \cref{thm:main}. We begin by presenting our novel list-decoding algorithm.

\subsection{The List-decoding Algorithm}

We present our list-decoding algorithm in Algorithm~\ref{algo:main}. At a high-level, interpolation equations in \cref{sec:interpolation} are used to compute an interpolation polynomial $Q \in \F_q[X, Y_0, Y_1, \hdots, Y_d]$ of suitable degree. Then, the algorithm of Kopparty~\cite{Kopparty2015} is invoked (as stated in \cref{thm:root-finding}) to find the list of potential message polynomials $P$.

\begin{algorithm}[t]\label{algo:main}
    \caption{List-decoding of Reed--Solomon Codes}
    \SetAlgoHangIndent{0pt}
    \SetAlgoLined
    \KwIn{Parameters $n,k,q,\theta,\eps$. Evaluation points $(\alpha_1, \hdots, \alpha_n) \in \F_q^n$, Received message $(y_1, \hdots, y_n) \in \F_q^n$.}
    \KwOut{A list $\mathcal L$ of polynomials $P \in \F_q[X]$ of degree at most $k-1$.}
    {
    Set $A = \lceil \eps n\rceil$, $d = \lceil \eps^{-3/\theta}\rceil$, $m = d^3$,$B = \lceil mA/(k-1)\rceil$, $W = \lfloor (1+\theta/2)dm / \log(ed)\rfloor$.
    
    Define the monomial space $\mathcal Q$ as in (\ref{eq:global-space}). Namely,
    \[
    \mathcal Q = \mathrm{span}\left\{
      X^aY_0^{b_0}Y_1^{b_1}Y^c:
    \begin{array}{l}
        a,b_0,b_1, c_2, \hdots, c_d\in\mathbb Z_{\ge0},\\
        \sum_{j=2}^d c_j \le \lceil (1+3\theta/4)m\rceil,\quad b_1\le m,\\
        b_0+b_1+|c|\le B,\quad \sum_{j=2}^d (j-1)c_j \le W\\
        a+(k-1)(b_0+b_1+|c|)<mA
    \end{array}
  \right\}.
    \]  
    
    Invoke \cref{prop:interpolation} to find some nonzero $Q \in \mathcal Q$ satisfying (\ref{eq:local-condition}) at $(\alpha_i, y_i)$ for all $i \in [n]$. That is, for all $i \in [n]$ enforce that the polynomial
    \begin{align}
    Q\left(\alpha_i + T, y - \sum_{j=1}^d(-T)^j Y_j + T^{d+1}E, Y_1, \hdots, Y_d\right)\label{eq:Q-d+1}
    \end{align}
    is divisible by $T^m$.
    
    Invoke \cref{thm:root-finding} to find a list $\mathcal L$ of polynomials $P \in \F_q[X]$ of degree at most $k-1$ for which
    \[
    Q(X, P(X), P^{[1]}(X), \hdots, P^{[d]}(X)) \equiv 0.
    \]
    
    Remove all $P \in \mathcal L$ for which $(P(\alpha_1), \hdots, P(\alpha_n))$ has agreement less than $A$ with $(y_1, \hdots, y_n)$. 
    
    \Return{$\mathcal L$}
    }
\end{algorithm}

\subsection{Analysis}

We now state our formal list-decoding result. 

\begin{theorem}[Main Result]\label{thm:main-formal}
Fix parameters $n \ge k \ge 1$ and $\eps, \theta \in (0,1)$ such that, $k > \lceil \eps^{-3/\theta}\rceil$, $k / n \le (1-\theta)\eps$, and 
\begin{align}
\eps < \left(\frac{\theta^3(1-\theta)}{768}\right)^{\frac{5+\theta}{1-\theta}}.\label{eq:eps-theta}
\end{align}
Further select a prime $q$ such that $q \ge \max(n, 4\eps^{1-9/\theta} n/k)$. Then, for any distinct evaluation points $\alpha_1, \hdots, \alpha_n \in \F_q$ and any received word $(y_1, \hdots, y_n) \in \F_q^n$, then Algorithm~\ref{algo:main} outputs, in $q^{O(\eps^{-12/\theta})}$ time, a list of all univariate polynomials $P \in \F_q[X]$ of degree less than $k$ such that the agreement between $(P(\alpha_1), \hdots, P(\alpha_n))$ and $(y_1, \hdots y_n)$ is at least $\eps n$. Furthermore, the lenght of this list is at most $q^{O(\eps^{-3/\theta})}$. 
\end{theorem}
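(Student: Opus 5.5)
The plan is to combine \cref{prop:interpolation} with \cref{thm:root-finding}, check that the hypotheses of the latter hold under the stated choice of $q$, and then bound the running time and list size. First, the hypotheses $k/n\le(1-\theta)\eps$ and \eqref{eq:eps-theta} are exactly those of \cref{prop:interpolation}. That proposition therefore supplies a nonzero $Q\in\mathcal Q$ satisfying all local constraints \eqref{eq:local-condition}. Moreover, every degree-$<k$ polynomial $P$ with agreement at least $A=\lceil\eps n\rceil$ satisfies $Q(X,P,P^{[1]},\ldots,P^{[d]})\equiv0$. Computationally, $Q$ is found by solving the homogeneous linear system with one unknown per monomial of $\mathcal Q$ and one equation per index in $\mathcal I$ at each of the $n$ points. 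Existence of a nonzero solution is guaranteed by the rank count already done. Gaussian elimination then costs $\operatorname{poly}(\dim\mathcal Q,\,n|\mathcal I|)$.

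Second, I verify the degree conditions of \cref{thm:root-finding}. We need $q\ge k>d$, which holds since $q\ge n\ge k>\lceil\eps^{-3/\theta}\rceil=d$. By the first item of \cref{lem:global-budgets}, $\deg_{Y_i}Q\le B=\lceil mA/(k-1)\rceil$. Next, $mA\le d^3(\eps n+1)$ and $d\le 2\eps^{-3/\theta}$, so $B\lesssim 8\eps^{-9/\theta}\cdot\eps n/(k-1)$. The hypothesis $q\ge 4\eps^{1-9/\theta}n/k$ is designed to dominate this. I would do the bookkeeping carefully, using $k>d$ to replace $k-1$ by roughly $k$ and absorbing constants, possibly at the cost of adjusting the constant $4$. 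For the weighted degree, each monomial $X^aY_0^{b_0}Y_1^{b_1}Y^c$ has $(1,k-1,\ldots,k-d-1)$-weight at most $a+(k-1)(b_0+b_1+|c|)<mA$. It then suffices to show $mA<q^2$. Since $mA\approx d^3\eps n$ and $q\ge n$ while $q\ge\eps^{1-9/\theta}\ge d^3\eps$ up to constants, we get $q^2\ge n\cdot d^3\eps\gtrsim mA$. This step, matching the exact constants in the hypothesis on $q$ to the bounds $B<q$ and $mA<q^2$, is where I expect the only real friction. It is routine but fiddly because of the ceilings in $d$, $A$, and $B$.

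Third, correctness and complexity. \cref{thm:root-finding} returns all $P$ of degree $\le k-1$ with $Q(X,P,\ldots,P^{[d]})\equiv0$, which by the first step include every $P$ with agreement $\ge A$. The final filtering keeps exactly those with agreement $\ge A\ge\eps n$, so the output is precisely the desired list. Its size is at most $q^{4d+6}=q^{O(\eps^{-3/\theta})}$. For the running time, the root-finding costs $q^{O(d+1)}$. For the interpolation, $|\mathcal I|\le m^2\cdot 2m\cdot\Lambda(W+m)$ and $\dim\mathcal Q\le mA\cdot B^{2}\cdot|\mathcal B|$. By \cref{lem:scaled-shell}, $\Lambda(W+m)\le (W+m+\binom d2)^{d-1}$, which is at most $(d^4)^{d}=q^{O(d)}$, as are $m$, $B$, and $A$. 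Hence all these quantities are $q^{O(d)}$, and the linear algebra runs in $q^{O(d)}$ time, well within $q^{O(\eps^{-12/\theta})}$. The stated bound $\eps^{-12/\theta}$ comfortably covers even cruder estimates, such as monomial counts of order $d^{4d}$ raised to a constant power.
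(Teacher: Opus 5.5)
Your proposal is correct and follows essentially the same route as the paper: invoke \cref{prop:interpolation}, check $q\ge k>d$, $B<q$ and $mA<q^2$ so that \cref{thm:root-finding} applies, filter by agreement, and bound the list size by $q^{4d+6}$. Two small remarks: first, no change to the constant $4$ is needed, since $d\le\eps^{-3/\theta}+1$ with $\eps^{-3/\theta}$ enormous gives $d^3<2\eps^{-9/\theta}$, and hence $B\le 2d^3\eps n/k<4\eps^{1-9/\theta}n/k\le q$, exactly as the paper argues; second, your monomial count via $\Lambda(W+m)\le(W+m+\binom d2)^{d-1}$ is sharper than the paper's cruder $d^{O(\eps^{-12/\theta})}$ count of choices for $c$, and it gives a $q^{O(d)}$ running time (including the cost of expanding the substitutions to set up the system), which implies the stated $q^{O(\eps^{-12/\theta})}$.
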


Before we prove, \cref{thm:main-formal}, we show how \cref{thm:main-formal} implies \cref{thm:main}.

\begin{proof}[Proof of \cref{thm:main}]
For fixed $\theta \in (0, 1)$ it is clear that (\ref{eq:eps-theta}) holds for all all sufficiently small $\eps > 0$. Invoking \cref{thm:main-formal}, for any prime $q \ge \max(n, 4\eps^{1-9/\theta} n/k) = \Theta_{\theta,\eps}(n)$ (which exists by Bertrand's postulate), we have that any Reed-Solomon code with $n$ distinct evaluation points over $\F_q$ of rate at most $(1-\theta)\eps$ can be decoded up to radius $(1 - \eps)n$ in time $q^{O(\eps^{-12/\theta})}$ with a list size at most $q^{O(\eps^{-9/\theta})}$.
\end{proof}

We now turn to proving \cref{thm:main-formal}.

\begin{proof}[Proof of \cref{thm:main-formal}]
We first prove that Algorithm~\ref{algo:main} is correct. For our parameters $k, n, \eps, \theta$ satisfying $k / n \le (1-\theta)\eps$ and (\ref{eq:eps-theta}), we have by \cref{prop:interpolation} we are guaranteed to find a nonzero $Q \in \mathcal Q$ such that for all $b \ge 0$, the coefficient of $E^b$ in
\[
    Q\left(\alpha_i + T, y - \sum_{j=1}^d(-T)^j Y_j + TE, Y_1, \hdots, Y_d\right)
\]
is divisible by $T^{m-db}$. By substituting $E$ with $T^d E$, we have that the coefficient of $E^b$ in 
\[
    Q\left(\alpha_i + T, y - \sum_{j=1}^d(-T)^j Y_j + T^{d+1}E, Y_1, \hdots, Y_d\right)
\]
is divisible by $T^m$. Thus, (\ref{eq:Q-d+1}) is indeed divisible by $T^m$.

To apply the root-finding algorithm in \cref{thm:root-finding}, we can see the choice of the parameter $k$ is valid as $k > \lceil \eps^{-3/\theta}\rceil = d$. We need to verify that the field size $q$ is sufficiently large relative to $Q$. First, we bound $\deg_{Y_i} Q$ for all $i \in \{0, 1, \hdots, d\}$. Since for any $X^aY_0^{b_0}Y_1^{b_1}Y^c \in \mathcal Q$  we have that  $b_0 + b_1 + |c| \le B$, observe
\[
    \deg_{Y_i} Q \le B \le \left\lceil \frac{mA}{k-1}\right\rceil \le 2d^3 \eps \frac{n}{k} < 4\eps^{1-9/\theta} \frac{n}{k} \le q.
\]
where the last line follows from the fact that $q \ge 4\eps^{1-9/\theta} \frac{n}{k}$.

Furthermore, since for any $X^aY_0^{b_0}Y_1^{b_1}Y^c \in \mathcal Q$, we have that $a + (k-1)(b_0 + b_1 + |c|) < mA$, we have that
\[
    \deg_{1,k-1,k-2, \hdots, k-d-1} Q < mA < 2d^3 \eps n \le 4\eps^{1-9/\theta} \frac{n}{k} \cdot k \le q^2,
\]
where the last inequality follows from the fact that $q \ge n \ge k$ and $q \ge 4\eps^{1-9/\theta}\frac{n}{k}$.

To finish, we give a run-time analysis. The time needed by \cref{thm:root-finding} is $q^{O(d+1)}$ = $q^{O(\eps^{-3/\theta})}$, so it suffices to bound the time it takes to compute $Q$.

First, we upper-bound the number of monomials in $\mathcal Q$. For any $X^aY_0^{b_0}Y_1^{b_1}Y^c \in \mathcal Q$, we have that $\sum_{j=2}^d c_j \le W \le 2d^4 = O(\eps^{-12/\theta})$, so there are at most $d^{O(\eps^{-12/\theta})} \le q^{O(\eps^{-12/\theta})} $ choices for $c$. Since $b_0 + b_1 \le B$, there are at most $B^2 \le q^2$ choices for $b_0$ and $b_1$. Finally, there are at most $mA \le q^2$ choice for $a$. Thus, $\mathcal Q$ has at most $q^4 \cdot q^{O(\eps^{-12/\theta})}= q^{O(\eps^{-12/\theta})}$ monomials. To count the number of constraints imposed, observe when writing out the expansion (\ref{eq:Q-d+1}), each monomial of $Q$ can become up to $(mA+1)\binom{B+d+1}{d+1} \le q^2 \cdot q^{2(d+1)}$ monomials. Thus, the total number of equations imposed is at most $n \cdot q^{2d+4} \cdot q^{O(\eps^{-12/\theta})} = q^{O(\eps^{-12/\theta})}$. It is clear that each constraint can be computed in $q^{O(\eps^{-12/\theta})}$ time, so the total time it takes to set up the linear system defining $Q$ can be done in $q^{O(\eps^{-12/\theta})}$ time. Furthermore, it only takes $q^{O(\eps^{-12/\theta})}$ to find a nonzero $Q$ in the kernel of this linear system. Thus, altogether the runtime is $q^{O(\eps^{-12/\theta})}$ with at most $q^{O(\eps^{-3/\theta})}$ polynomials found.
\end{proof}

\section{Extending to All Rates}\label{sec:main-cor}
Here we state and prove the formal version of our all-rates main theorem, or the formal version of \cref{cor:all-rates}. 

\begin{corollary}[Formal Version of \Cref{cor:all-rates}]
\label{cor:all-rates-formal}
For every rate $R \in (0,1)$ and slack $\delta \in (0,1)$ there exists a constant $C >0$ such that the following
holds. Fix parameters $n\ge k\ge1$ satisfying
\[
\frac{k}{n}\le R,
\]
and let $q$ be any prime satisfying $q\ge Cn$. Then, for any distinct
evaluation points $\alpha_1,\ldots,\alpha_n\in\F_q$ and any received word
$(y_1,\ldots,y_n)\in\F_q^n$, there is a deterministic algorithm which
outputs, in $q^{O_{R,\delta}(1)}$ time, a list of all univariate
polynomials $P\in\F_q[X]$ of degree less than $k$ such that the agreement
between $(P(\alpha_1),\ldots,P(\alpha_n))$ and
$(y_1,\ldots,y_n)$
is at least $(R+\delta)n$. Furthermore, the length of this list is at most
$q^{O_{R,\delta}(1)}$.
\end{corollary}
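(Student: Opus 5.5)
The plan is to reduce to the low-rate regime of \cref{thm:main-formal} by \emph{dividing out} a set of positions on which the unknown message is assumed to agree with the received word. Fix $R,\delta$, and let $P$ be any message of degree $<k$ whose agreement set $S^\ast$ has $|S^\ast|\ge (R+\delta)n$.

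Suppose we knew a set $S\subseteq S^\ast$ with $|S|=s:=k-k'$, where $k'=\lceil \eta n\rceil$ and $\eta>0$ is a small constant to be chosen. Let $Z_S(X)=\prod_{i\in S}(X-\alpha_i)$, and let $L_S$ be the Lagrange interpolant of $(\alpha_i,y_i)_{i\in S}$. Then
\[
P=L_S+Z_S\cdot P',\qquad \deg P'<k' .
\]
Since $Z_S(\alpha_i)\neq 0$ for $i\notin S$, at every remaining position $P'$ agrees with the transformed word
\[
y_i'=\frac{y_i-L_S(\alpha_i)}{Z_S(\alpha_i)},\qquad i\notin S.
\]
This transformed problem is again a Reed--Solomon list-decoding instance with the same distinct evaluation points (minus $S$) over the same prime field. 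Its length is $n'=n-s\ge(1-R)n$. Its rate is $k'/n'\le \eta/(1-R)$. Its agreement is at least $|S^\ast|-s\ge \delta n+k'\ge \delta n$.

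Next I set parameters for \cref{thm:main-formal}. Choose $\theta=1/2$, and choose $\eps'$ below the threshold \eqref{eq:eps-theta} with $\eps'\le\delta$. Then set $\eta=(1-\theta)\eps'(1-R)$. With these choices the new rate is at most $(1-\theta)\eps'$ and the new agreement is at least $\eps' n'$, so the hypotheses of \cref{thm:main-formal} hold. The field-size hypothesis $q\ge\max\bigl(n',4\eps'^{1-9/\theta}n'/k'\bigr)$ is implied by $q\ge Cn$ for a suitable $C=C(R,\delta)$, because $n'/k'=O_{R,\delta}(1)$. The condition $k'>\lceil\eps'^{-3/\theta}\rceil$ holds once $n$ is sufficiently large. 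The decoder then returns all candidates $P'$, each of which lifts to $P=L_S+Z_SP'$. Finally I keep only those $P$ with agreement at least $(R+\delta)n$. The list-size bound $q^{O_{R,\delta}(1)}$ and the running time both come from \cref{thm:main-formal}, multiplied by the number of sets $S$ we try.

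The main obstacle is the step of obtaining $S\subseteq S^\ast$ deterministically. We cannot enumerate all $s$-subsets, since $s$ is linear in $n$, and a fixed choice of $S$ need not lie inside $S^\ast$. What I would try first is to avoid needing $S\subseteq S^\ast$ exactly, and instead use a polynomial-size family of candidate sets $S$ together with a tolerance for errors inside $S$. If $S$ contains $t$ disagreements, the identity above becomes $P=L_S+Z_SP'+(\text{correction})$. One would absorb the correction by enlarging $k'$ to $k'+t$, at the price of a small loss in the agreement fraction. The target would then be to show that some set in a structured family, for example a family of contiguous blocks of an ordering of $[n]$, satisfies $|S\setminus S^\ast|\le \eta' n$ for a constant $\eta'\ll\delta$. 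If no clean averaging argument works, the fallback is to pass to list recovery: I would redo the interpolation of \cref{sec:interpolation} with several candidate values $y$ per point, which multiplies the local rank bound of \cref{lem:local-rank} by the list size. I expect that this extra factor can be absorbed by a larger choice of $d$.
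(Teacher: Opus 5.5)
There is a genuine gap, and it sits exactly at the step you flag as the main obstacle. Your conditional reduction is fine: if you were given $S\subseteq S^\ast$ with $|S|=k-k'$, dividing out $Z_S$ does give a low-rate Reed--Solomon instance meeting the hypotheses of \cref{thm:main-formal}. But none of your proposed ways of obtaining $S$ works.

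\emph{No polynomial-size family of candidate sets suffices.} Take generic $R,\delta$ (say $R=1/2$, $\delta=1/10$) and a fixed $S$ of size $s=\Theta(n)$. It meets a uniformly random $S^\ast$ of size $(R+\delta)n$ in about $(R+\delta)s$ points, so $|S\setminus S^\ast|\le\eta' n$ with $\eta'$ a tiny constant happens only with probability $2^{-\Omega(n)}$. A union bound over the family then produces an $S^\ast$ that no member nearly contains. Taking a received word that agrees with some $P$ exactly on that $S^\ast$ defeats the family. Contiguous blocks fail concretely when $S^\ast$ is spread evenly, since every window of length $s$ then has about $(1-R-\delta)s$ disagreements.

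\emph{The absorption step is also invalid.} If $S$ contains disagreements, $P-L_S$ is divisible only by $Z_{S\cap S^\ast}$. The correction $L^{P}_S-L_S$ equals $Z_{S\cap S^\ast}h$ with $\deg h<t$, which has degree up to $s-1$. So either you divide by an unknown polynomial, and the transformed word then depends on the unknown set $S\cap S^\ast$, or you write $Z_{S\setminus S^\ast}(P-L_S)=Z_SG$. The latter is an error-locator (rational-function) decoding problem, not a Reed--Solomon instance.

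\emph{The list-recovery fallback} would mean redoing \cref{sec:interpolation}, and it still does not say where $S$ comes from.

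The missing idea is that you never need to locate agreements. Once the block length is free, the hypotheses of \cref{thm:main-formal} only compare the absolute agreement with $k$, so you should \emph{pad} the code rather than puncture it.
\begin{itemize}
\item Take $\theta=\delta/(2(R+\delta))$, so that $R/(1-\theta)<R+\delta$.
\item Take $\eta$ below the threshold \eqref{eq:eps-theta} with $\eta<R+\delta$.
\item Set $k'=\max\{k,\lceil\eta^{-3/\theta}\rceil+1\}$ and $N=\max\{n,\lceil k'/((1-\theta)\eta)\rceil\}=\Theta_{R,\delta}(n)$.
\item Since $q\ge Cn\ge N$, extend $\alpha_1,\dots,\alpha_n$ to $N$ distinct points of $\F_q$ and pad $y$ with zeros.
\end{itemize}
Then $k'/N\le(1-\theta)\eta$. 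For large $n$ you also get $\eta N\le k'/(1-\theta)+\eta\le Rn/(1-\theta)+O(1)<(R+\delta)n$ (or simply $\eta N=\eta n<(R+\delta)n$ when $N=n$). Hence every $P$ of degree $<k\le k'$ with $(R+\delta)n$ agreements on the original word has at least $\eta N$ agreements on the padded word. Apply \cref{thm:main-formal} to $\mathsf{RS}_{N,k'}$ with parameters $(N,k',\eta,\theta,q)$, then prune by agreement on the original $n$ positions. This is where the hypothesis $q\ge Cn$ with a large $C$ does real work: it leaves room for $N=\Theta_{R,\delta}(n)$ distinct evaluation points.
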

\begin{proof}[Proof of \cref{cor:all-rates}]
Set
\[
\theta:=\frac{\delta}{2(R+\delta)},
\]
so that
\[
\frac{R}{1-\theta}<R+\delta.
\]
Fix a constant $\eta>0$ satisfying
\[
\eta<
\left(
\frac{\theta^3(1-\theta)}{768}
\right)^{\frac{5+\theta}{1-\theta}}
\qquad\text{and}\qquad
\eta<R+\delta.
\]
Let
\[
A:=\lceil(R+\delta)n\rceil,
\qquad
k'
:=
\max\left\{
k,\,
\left\lceil\eta^{-3/\theta}\right\rceil+1
\right\},
\]
and set
\[
N
:=
\max\left\{
n,\,
\left\lceil
\frac{k'}{(1-\theta)\eta}
\right\rceil
\right\}.
\]
By construction,
\[
k'>\left\lceil\eta^{-3/\theta}\right\rceil
\qquad\text{and}\qquad
\frac{k'}{N}\le(1-\theta)\eta.
\]

Moreover, since $k'\le k+O_{R,\delta}(1)\le Rn+O_{R,\delta}(1)$ and
$R/(1-\theta)<R+\delta$, for all sufficiently large $n$ we have $\eta N\le A$. Also $N=\Theta_{R,\delta}(n)$.

Choose $C=C(R,\delta)$ sufficiently large that whenever $q\ge Cn$,
\[
q\ge
\max\left\{
N,\,
4\eta^{1-9/\theta}\frac{N}{k'}
\right\}.
\]
Such a constant exists since $N=\Theta_{R,\delta}(n)$ and
$k'\ge1$.

Extend the original evaluation set to $N$ distinct points in $\F_q$, and
let $y'\in\F_q^N$ be the received word obtained by padding the original
word with $N-n$ zeros. We apply \cref{thm:main-formal} to $\mathsf{RS}_{N,k'}(\alpha_1,\ldots,\alpha_N)$
with parameters $(N,k',\eta,\theta,q)$ and received word $y'$.
The hypotheses of \cref{thm:main-formal} hold by the calculations above, so
we obtain a list $\mathcal L'$ containing every polynomial of degree less
than $k'$ having at least $\eta N$ agreements with $y'$.

Every polynomial of degree less than $k$ having at least $A$ agreements
with the original received word still has at least $A\ge\eta N$ agreements
with the padded word, and hence belongs to $\mathcal L'$. We therefore
prune $\mathcal L'$ by retaining only those polynomials $P$ satisfying
\[
\left|\{i\in[n]:P(\alpha_i)=y_i\}\right|\ge A.
\]
The resulting list is exactly the desired list decoding of the original word $y$.

Finally, since $\eta$ and $\theta$ depend only on $R,\delta$,
\cref{thm:main-formal} gives running time $q^{O_{R,\delta}(1)}$ and list
size $q^{O_{R,\delta}(1)}$, and the pruning step only adds polynomial time in
the size of the list.
\end{proof}
\printbibliography

\appendix

\section{A Proof of \cref{lem:simplexExpectation}}

We first recall the following lemma we seek to prove.

\begin{lemma}
Let $S=\operatorname{conv}\{v_0,\ldots,v_{d-1}\}$
be a $d-1$-dimensional simplex, and let $X$ be uniformly distributed on $S$. Then
\[
\mathbb{E}[X]=\frac{1}{d}\sum_{i=0}^{d-1} v_i.
\]
In particular, the expectation of $X$ is the centroid of $S$.
\end{lemma}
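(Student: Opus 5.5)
The plan is to reduce to the standard simplex through an affine change of variables, and then use symmetry. Write $n=d-1$ for the dimension and let
\[
\Delta=\Bigl\{\lambda\in\mathbb{R}_{\ge0}^{n}:\textstyle\sum_{i=1}^{n}\lambda_i\le 1\Bigr\}.
\]
Consider the affine map
\[
\phi(\lambda)=v_0+\sum_{i=1}^{n}\lambda_i(v_i-v_0).
\]
Since $S$ is $(d-1)$-dimensional, the vectors $v_i-v_0$ are linearly independent, so $\phi$ is an affine bijection from $\Delta$ onto $S$. Its Jacobian is constant, so it carries the uniform measure on $\Delta$ to the uniform measure on $S$. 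Hence if $\Lambda$ is uniform on $\Delta$, then $X\overset{d}{=}\phi(\Lambda)$. By linearity of expectation,
\[
\E[X]=v_0+\sum_i \E[\Lambda_i](v_i-v_0),
\]
so it suffices to show that $\E[\Lambda_i]=1/d$ for each $i$.

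To compute $\E[\Lambda_i]$, introduce the slack coordinate $\Lambda_0=1-\sum_{i\ge1}\Lambda_i$, so that $(\Lambda_0,\ldots,\Lambda_n)$ lies on the probability simplex in $\mathbb{R}^{d}$. The key step is that this random vector is exchangeable. For $i,j\ge1$ this is clear, because swapping the two coordinates is a volume-preserving map of $\Delta$ onto itself. Exchanging $\Lambda_0$ with $\Lambda_1$ corresponds to the affine map $\lambda_1\mapsto 1-\sum_{i\ge1}\lambda_i$ with the other coordinates fixed. This map sends $\Delta$ to itself, and its linear part has determinant $-1$, so it preserves Lebesgue measure. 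Together these transpositions generate all permutations, so all $\Lambda_i$ share the same marginal distribution. Since $\sum_{i=0}^{n}\Lambda_i=1$, taking expectations gives $d\,\E[\Lambda_i]=1$, that is, $\E[\Lambda_i]=1/d$.

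Substituting back,
\[
\E[X]=v_0+\frac1d\sum_{i\ge1}(v_i-v_0)=\Bigl(1-\frac{d-1}{d}\Bigr)v_0+\frac1d\sum_{i\ge1}v_i=\frac1d\sum_{i=0}^{d-1}v_i.
\]

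The only mildly delicate step is confirming that the $\Lambda_0\leftrightarrow\Lambda_1$ swap preserves measure. As an alternative that avoids this, one can compute $\E[\Lambda_1]$ directly. The slice $\{\lambda_1=t\}$ of $\Delta$ is a scaled $(n-1)$-simplex of volume $(1-t)^{n-1}/(n-1)!$. Using \cref{prop:simplexVolume} for the normalization, this gives
\[
\E[\Lambda_1]=n\int_0^1 t(1-t)^{n-1}\,dt=n\,B(2,n)=\frac{1}{n+1}=\frac1d.
\]
This is the route I would write if the symmetry argument felt too informal.
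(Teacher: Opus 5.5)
Your proof is correct and follows essentially the same route as the paper: push the uniform measure forward from a standard simplex by an affine map with constant Jacobian, then use exchangeability of the barycentric coordinates, the identity $\sum_i \Lambda_i = 1$, and linearity of expectation. The only difference is cosmetic. You parametrize by the full-dimensional corner simplex with a slack coordinate, so you must check that swapping $\Lambda_0$ and $\Lambda_1$ preserves Lebesgue measure, which you do correctly via the determinant $-1$ map. The paper instead works on the probability simplex in $\mathbb{R}^d$, where permutation invariance is immediate but the uniform measure lives on a hyperplane.
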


\begin{proof}

We start by considering the standard simplex
\[
\Delta_{d-1}
=
\left\{
(\lambda_0,\ldots,\lambda_{d-1})\in\mathbb{R}_{\geq 0}^{d}
:
\sum_{i=0}^{d-1}\lambda_i=1
\right\},
\]
and define the affine map
\[
F(\lambda_0,\ldots,\lambda_{d-1})
=
\sum_{i=0}^{d-1}\lambda_i v_i.
\]
Since the vertices $v_0,\ldots,v_{d-1}$ are affinely independent, $F$ is an affine bijection from $\Delta_{d-1}$ onto $S$.

The Jacobian of $F$, restricted to the affine hull of $\Delta_{d-1}$, is a fixed nonzero constant. Consequently, $F$ multiplies the volume of every measurable subset of $\Delta_{d-1}$ by the same factor. Thus, if $Z=(Z_0,\ldots,Z_{d-1})$ is uniformly distributed on $\Delta_{d-1}$, then $F(Z)$ is uniformly distributed on $S$. We may therefore write
\[
X\stackrel{\mathrm{d-1}}{=}\sum_{i=0}^{d-1}Z_i v_i.
\]

The uniform distribution on $\Delta_{d-1}$ is invariant under every
permutation of its coordinates. Hence
\[
\mathbb{E}[Z_0]
=
\mathbb{E}[Z_1]
=
\cdots
=
\mathbb{E}[Z_{d-1}].
\]
On the other hand, $\sum_{i=0}^{d-1}Z_i=1$ identically. Taking expectations gives
\[
\sum_{i=0}^{d-1}\mathbb{E}[Z_i]=1,
\]
and therefore
\[
\mathbb{E}[Z_i]=\frac{1}{d}
\qquad\text{for every }i.
\]
By linearity of expectation,
\[
\mathbb{E}[X]
=
\sum_{i=0}^{d-1}\mathbb{E}[Z_i]v_i
=
\frac{1}{d}\sum_{i=0}^{d-1} v_i,
\]
which is precisely the centroid of $S$.
\end{proof}
\end{document}